\documentclass[12pt,english]{iopart} 

\expandafter\let\csname equation*\endcsname\relax
\expandafter\let\csname endequation*\endcsname\relax

\usepackage{graphicx} 
\usepackage{dcolumn}   
\usepackage{bm}        
\usepackage{multirow}
\usepackage[export]{adjustbox}
\usepackage{amssymb}
\usepackage{amsfonts}   
\usepackage{amsmath} 
\usepackage{amsthm}
\usepackage{cite}
\usepackage{color}
\usepackage{marvosym}
\newcommand{\mathbbm}[1]{\text{\usefont{U}{bbm}{m}{n}#1}} 
\hyphenation{ALPGEN}
\hyphenation{EVTGEN}
\hyphenation{PYTHIA}

\newtheorem{theorem}{Theorem}
\newtheorem{corollary}[theorem]{Corollary}



\newcommand{\bra}[1]{\langle #1|}
\newcommand{\ket}[1]{|#1\rangle}

\newcommand{\ketbra}[1]{| #1\rangle \langle #1|}

\newcommand{\be}{\begin{equation}}
\newcommand{\ee}{\end{equation}}
\newcommand{\bea}{\begin{eqnarray}}
\newcommand{\eea}{\end{eqnarray}}

\newcommand{\eins}{\mathbbm{1}}

\newcommand{\kommentar}[1]{}

\newtheorem{lemma}[theorem]{Lemma}

\newcommand{\forget}[1]{}

\begin{document}

\title{Graphical description of unitary transformations on hypergraph states}

\author{
Mariami Gachechiladze$^1$,
Nikoloz Tsimakuridze$^2$,\\
and Otfried G\"uhne$^1$
}

\address{$^1$Naturwissenschaftlich-Technische Fakult\"at,
Universit\"at Siegen,
Walter-Flex-Stra{\ss}e~3,
57068 Siegen, Germany}

\address{$^2$School of Mathematics and Computer Science, 
Free University of Tbilisi, \\
240 David Agmashenebeli alley, 0159 Tbilisi,
Georgia
}

\date{\today}

\begin{abstract}
Hypergraph states form a family of multiparticle quantum states 
that generalizes cluster states and graph states. We study the action 
and graphical representation of nonlocal unitary transformations between 
hypergraph states. This leads to a generalization of local complementation 
and graphical rules for various gates, such as the CNOT gate and the Toffoli 
gate. As an application, we show that already for five qubits local Pauli 
operations are not sufficient to check local equivalence of hypergraph 
states. Furthermore, we use our rules to construct
entanglement witnesses for three-uniform hypergraph states.
\end{abstract}


\section{Introduction}
Due to its possible applications in quantum information processing,
multiparticle entanglement is under intensive research. One of the
problems in this field is the identification of families of states which
are useful in applications, but nevertheless can be described by a simple
formalism.  An interesting class of multi-qubit quantum states are graph 
states \cite{hein}. Mathematically, these states are described by graphs, 
where the vertices correspond to particles and the edges represent two-body
interactions in a possible generation process. A generalization of these states
are hypergraph states \cite{Kruszynska2009, Qu2013_encoding, rossinjp, Otfried}.
In a hypergraph, an edge can connect more than two vertices, so hypergraph states
can be generated with multi-qubit interactions. Hypergraph states have 
turned out to violate local realism in a robust manner \cite{gbg}, they 
play a role in quantum algorithms \cite{scripta} and are central for novel 
schemes of measurement-based quantum computation \cite{akimasa}.

A general feature of graph and hypergraph states is that different graphs may
lead to quantum states with the same entanglement properties. It is therefore
important to study the action of local and nonlocal unitary transformations between
these states. For graph states, the so-called local complementation plays an outstanding
role \cite{nestgraphicaldescription}: This graphical transformation corresponds to 
so-called local Clifford operations and these operations represent all possible local unitary 
transformations between graph states for up to eight qubits \cite{adan8pra}, only
for large qubit numbers other transformations play a role \cite{Ji2010_LU-LC, nika}.

In this paper we derive graphical rules to represent various unitary transformations
between hypergraph states. First, we introduce a generalization of local complementation
to hypergraphs and the corresponding unitary transformations. Then we consider different
quantum gates, such as the CNOT and Toffoli gate and their graphical representation. In 
general, the considered unitary transformations are nonlocal, but in some cases they can be
combined to give effectively local transformations. With that, we find pairs of five-qubit hypergraph states, 
which are equivalent under local unitary transformations, but they are not equivalent under
local Pauli operations. These are the first examples of this kind, up to four qubits all 
locally equivalent hypergraph states could be transformed into each other by application
of Pauli matrices only \cite{Otfried, chenlei}. As a second application, we construct entanglement witnesses for 
hypergraph states which contain only three-edges. This will be useful for 
characterizing entanglement in these states experimentally. 

\section{Local complementation of hypergraph states}

\subsection{Basic definitions and  local complementation of graphs}

Let us start by defining hypergraph states, a detailed discussion of
their properties can be found in Ref.~\cite{Otfried}. A hypergraph 
$H=(V,E)$ consists of a set of vertices $V=\{1,\dots,N\}$ and a set 
of hyperedges $E\subset 2^V$, with $2^V$ being the power set of $V$, 
some examples of hypergraphs can be found in Fig.~\ref{fig1}. While 
for graphs edges connect exactly two vertices, hyperedges can connect 
more than two vertices, or contain just a single vertex. For any hypergraph the corresponding 
hypergraph state $\ket{H}$ is defined as the $N$-qubit state 
\begin{equation}
\ket{H} = \prod_{e\in E} C_e\ket{+}^{\otimes N},
\label{eq-hg-creation}
\end{equation}
where $\ket{+}=(\ket{0}+\ket{1})/\sqrt{2}$ are the initial single-qubit 
states, $e\in E$ is a hyperedge and $C_e$ is a multi-qubit phase gate 
$C_e =\mathbbm {1}  - 2\ket{1\dots 1}\bra{1\dots 1}$, acting on the Hilbert 
space associated with the vertices $v\in e$. Since all these phase gates 
commute, the order in the product does not matter. It is useful to note 
that hypergraph states are exactly the states that can be written as
\begin{equation}
\ket{H} = \sum_{x\in\{0,1\}^n}(-1)^{f(x)}\ket{x},
\label{eq-hg-equally}
\end{equation}
with $f(x)\in \{0,1\}$ being some binary function. {From} this representation, 
one recognizes that hypergraph states are special cases of locally maximally 
entanglable (LME) states. LME states are wider classes of states as they 
allow arbitrary equal complex phases (corresponding to an arbitrary $f(x)$)
in the full computational basis.  The  name LME is due to the fact that they 
are maximally entangleable to auxiliary systems using only local operations 
\cite{Kruszynska2009}.

A very important subclass  of hypergraph states are graph states.  
Their properties and applications have been studied extensively.  
Graph states correspond to graphs and therefore, only two-body 
controlled phase gates $C_{\{ij\}}$ are required for their generation. 
They are local stabiliser states and prominent examples of them  
are Greenberger-Horne-Zeilinger states and cluster states. For a review on graphs states  
we direct the reader to  Ref.~\cite{hein}.

Once it is established that graph states are important classes of 
multiqubit states, it is crucial to learn which of these states 
are equivalent under local actions of each party. As local actions 
one considers here local unitary transformations. Here the discrete
subclass of local Clifford operations play an outstanding role. By
definition, local Clifford operations leave the set of Pauli matrices invariant.
It has been shown that a graph state $\ket{G}$ can be transformed to 
another graph state $\ket{G'}$ by means of local Clifford action on 
some parties, if the graph $G'$ can be obtained from the graph $G$ 
by a series of local complementations \cite{nestgraphicaldescription}. 
The local complementation 
of a graph $G$ works as follows: One picks a vertex $a\in V$ and 
complements then subgraph in the neighbourhood $\mathcal{N}(a)$ of $a$, 
defined as the set containing all adjacent vertices to $a$. The 
complementation means that vertices in the neighbourhood become 
disconnected, if they were connected before, and they become connected, 
if they were disconnected before. Originally, the rule of local 
complementation has been conjectured to be necessary and sufficient 
for local unitary equivalence of graph states \cite{nestgraphicaldescription}, however, it 
was later disproved  by counterexamples \cite{Ji2010_LU-LC, nika}.

In order to physically achieve the local unitary transformation 
corresponding local complementation, the following unitary transform 
is considered:
\begin{equation}
\tau_g(a) = \sqrt{X_{a}}^\pm\underset{_{b\in N(a)}}{\prod}\sqrt{Z_{b}}^\mp,
\label{eq-lc-graph}
\end{equation}
where $X$ and $Z$ denote the Pauli matrices, and 
\begin{equation}
\sqrt{X}^\pm = \begin{pmatrix}
\frac{1 \pm i}{2} & \frac{1 \mp i}{2} \\
\frac{1 \mp i}{2} & \frac{1 \pm i}{2}
\end{pmatrix}=
\ketbra{+}\pm i\ketbra{-},
\qquad
\sqrt{Z}^\pm = \begin{pmatrix}
1 & 0 \\
0 & \pm i
\end{pmatrix}.
\end{equation}
In the next section, we will generalize this to hypergraph states.

\subsection{Local complementation of hypergraphs}
Now we extend the term local complementation and its action to all 
hypergraph states. We denote the set of vertices of a hypergraph to be 
$V$ and the set of edges to be $E$. First we introduce a term which can be regarded 
as a generalization of the term neighbourhood known in graph theory. 
We call it \textit{adjacency} of a vertex $a\in V$ and denote it 
by  $\mathcal{A}(a)=\{e-\{a\}|e\in E \mbox{ with } a\in e \}$. 
The elements of $\mathcal{A}(a)$ are sets of vertices which are
adjacent to $a$ via some hyperedge. To give an an example, the 
adjacency of the vertex  $a=1$ from the  hypergraph in the top image 
of Fig.~\ref{fig1}~(a) is given by $\mathcal{A}(1)=\{\{3\},\{2,3\},\{4,5\}\}$.  
Similarly we can define the adjacency for some set of vertices  
$W\subseteq V$ as  $\mathcal{A}(W)=\{e-W| e\in E \mbox{ with } 
{W} \subseteq e\}$.

For formulating our main result, we introduce the concept of a local 
edge-pair complementation in hypergraphs around a vertex $a\in V$. 
Let us define first the set of \textit{adjacency pairs} of vertex 
$a$ to be the set $\mathcal{A}_2(a) = \{ \{e_1, e_2\} | e_1 \neq e_2,
e_1 \in \mathcal{A}(a), e_2 \in \mathcal{A}(a) \}$ of all distinct pairs 
in the adjacency set. Considering again the top image from the 
Fig.~\ref{fig1}~(a),  
the set of adjacency pairs for vertex $a=1$ is given by the set  
$\mathcal{A}_2(1)=\{\{\{3\},\{2,3\}\},\{\{3\},\{4,5\}\}, \{\{2,3\},\{4,5\}\}\}$. 
Finally, the {\it local edge-pair complementation} around a vertex $a$ complements 
the edges in the multiset $P = \{ e_1 \cup e_2 | \{e_1, e_2\} \in \mathcal{A}_2(a)\}$.
Notice that $P$ is a multiset and only the edges appearing with odd multiplicity 
will be affected. We again consider the top image from Fig.~\ref{fig1}~(a) for which 
the multiset is $P = \{\{2,3\},\{3,4,5\},\{2,3,4,5\}\}$. Complementation of the edges
in this multiset means that they are deleted from the hypergraph, if they were already
present, and the are added, if they were not present. 

In the following theorem we show that a local edge-pair complementation transforming  
a hypergraph state $\ket{H}$ to a hypergraph state $\ket{H'}$ around a vertex $a$ 
can be achieved by the the following nonlocal operation:
\begin{equation}
\tau(a)= \sqrt{X_{a}}^\pm\underset{e \in \mathcal{A}(a)}{\prod}\sqrt{C_{e}}^\mp,
\label{eq-lc-hypergraph}
\end{equation}
where $\sqrt{C_{e}}^\pm=\mathbbm{1}-[1-(\pm i)]\ketbra{11\dots 1}$  is the square
root of controlled phase gate applied to qubits in edge $e$. It is a diagonal 
operator with every eigenvalue being one, except for the eigenvalue corresponding 
to state $\ket{1}^{\otimes |e|}$, which is $\pm i$. In the following, we sometimes 
write $\tau^\pm(a)$ in order to indicate the sign of $\sqrt{X_{a}}^\pm.$
Note that for a usual graph Eq.~(\ref{eq-lc-hypergraph}) corresponds to 
Eq.~(\ref{eq-lc-graph}) and local edge-pair complementation corresponds to the local complementation. We can 
formulate:

\begin{figure}[t]
\begin{center}
\includegraphics[width=0.85\textwidth, right]{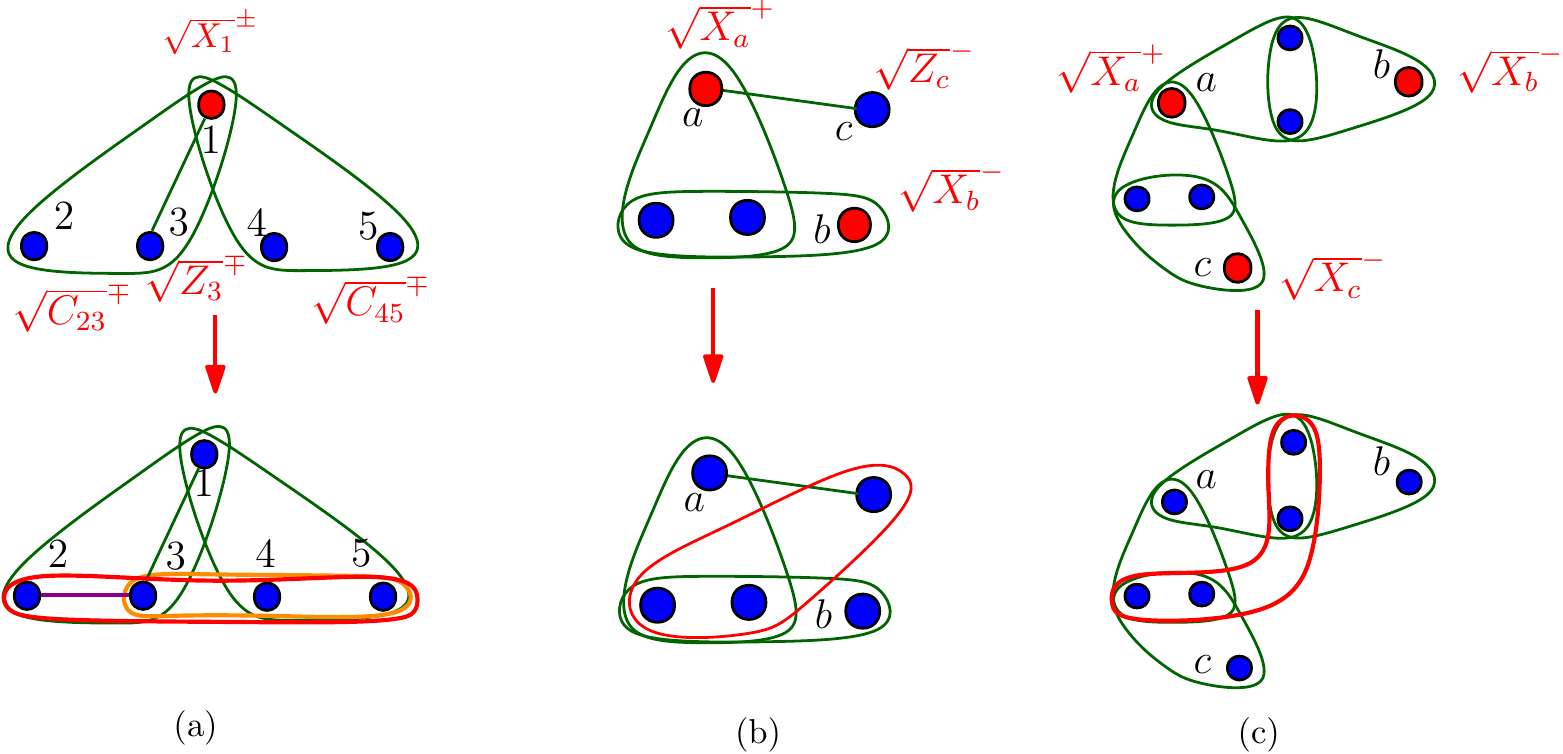}
\end{center}
\caption{The top row  shows hypergraphs before application of the extended 
local complementation rules. The bottom row shows resulting hypergraphs 
after the transformation has been made. (a) An example of a hypergraph 
state. Here set of vertices are $\{1,2,3,4,5\}$ , the set of edges are 
$E=\{\{1,2,3\},\{1,3\},\{1,4,5\}\}$, the set of adjacencies for qubit $1$ 
is $\mathcal{A}(1)=\{\{3\},\{2,3\},\{4,5\}\}$ and  the adjacency pairs are  
$\mathcal{A}_2(1)=\{\{\{3\},\{2,3\}\},\{\{3\},\{4,5\}\}, \{\{2,3\},\{4,5\}\}\}$ 
Therefore, the set of complemented edges are $P=\{\{2,3\},\{3,4,5\},\{2,3,4,5\}\}$. 
After application of $\tau(1)$ the edges from the multiset $P$ are complemented 
and in this case all three new edges are created. 
(b) Here, we apply two transformations, $\tau^+(a)$ and $\tau^-(b)$.
As a consequence, the square roots of the two-qubit phase gates cancel out
and we are left with local Clifford operations. This example demonstrates 
that local Pauli operations are not enough to exhaust all equivalence 
classes already in five-qubit hypergraph states. 
(c) An example of application of three transformations. Again, phase gates 
on the neighbourhoods of vertices $a$, $b$, and $c$  cancel out and a local
transformation remains.}
\label{fig1}
\end{figure} 
 
\begin{theorem}
For any hypergraph state the transformation $\tau(a)$ around a 
vertex $a \in V$  performs a local edge-pair complementation on 
its corresponding hypergraph.
\end{theorem}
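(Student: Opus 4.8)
The plan is to compute the action of $\tau(a)$ on a generic hypergraph state directly in the computational basis and read off the resulting phase function. First I would isolate the special vertex $a$: since every hyperedge $e\ni a$ acts as a phase gate controlled by qubit $a$, one can write
\begin{equation}
\ket{H}=\tfrac{1}{\sqrt{2}}\big(\ket{0}_a\ket{\psi_0}+\ket{1}_a\ket{\psi_1}\big),\qquad \ket{\psi_1}=U_A\ket{\psi_0},
\end{equation}
where $U_A=\prod_{e'\in\mathcal{A}(a)}C_{e'}$ and $\ket{\psi_0}=\prod_{e\not\ni a}C_e\ket{+}^{\otimes(N-1)}$ both act on the remaining qubits. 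Expanding $\ket{\psi_0}$ in the computational basis $\{\ket{x}\}$ of the $\bar a$ qubits, every operator other than $\sqrt{X_a}^{\pm}$ is diagonal there, so the only bookkeeping needed is the integer
\begin{equation}
S(x)=\sum_{e'\in\mathcal{A}(a)}\prod_{v\in e'}x_v,
\end{equation}
the number of adjacency edges that are ``fully lit'' by $x$. Then $U_A$ multiplies $\ket{x}$ by $(-1)^{S(x)}$, while the diagonal factor $W=\prod_{e'}\sqrt{C_{e'}}^{\mp}$ multiplies it by $(\mp i)^{S(x)}$.

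Next I would apply $\sqrt{X_a}^{\pm}$ to the qubit-$a$ part, splitting into the two parities of $S(x)$: when $S$ is even the relevant single-qubit vector is proportional to $\ket{+}$ (an eigenvector with eigenvalue $1$), and when $S$ is odd it is proportional to $\ket{-}$ (eigenvalue $\pm i$). A short calculation in both cases shows that every resulting amplitude equals $\pm 2^{-N/2}$, so the output is again a hypergraph state, and that its phase function $f'$ differs from the original $f$ by
\begin{equation}
f'(x)-f(x)\equiv \Big\lfloor \tfrac{S(x)}{2}\Big\rfloor \pmod 2,
\end{equation}
independently of the value of $x_a$ and of the chosen sign $\pm$. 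This last point is what makes $\tau^{+}(a)$ and $\tau^{-}(a)$ implement the same graphical move.

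The decisive step is to recognise this carry term combinatorially. Using $\lfloor S/2\rfloor\equiv\binom{S}{2}\pmod 2$ together with the fact that a pair $\{e_1,e_2\}$ of adjacency edges is simultaneously fully lit exactly when $e_1\cup e_2$ is fully lit, I would rewrite
\begin{equation}
\Big\lfloor \tfrac{S(x)}{2}\Big\rfloor\equiv\binom{S(x)}{2}\equiv\sum_{\{e_1,e_2\}\in\mathcal{A}_2(a)}\prod_{v\in e_1\cup e_2}x_v=\sum_{p\in P}\prod_{v\in p}x_v \pmod 2 .
\end{equation}
Since this is evaluated modulo $2$, edges of the multiset $P$ occurring with even multiplicity drop out and only the odd-multiplicity ones survive; moreover each $e_1\cup e_2$ avoids $a$, so no edge through $a$ is touched. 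Adding $\prod_{v\in p}x_v$ to the phase function is precisely toggling the presence of the hyperedge $p$, so $f'$ is the phase function of the hypergraph $H'$ obtained from $H$ by complementing exactly the edges of $P$ --- this is the claimed local edge-pair complementation, and hence $\tau(a)\ket{H}=\ket{H'}$. I expect the main obstacle to be the $\sqrt{X_a}^{\pm}$ case analysis in the previous step, where one must check that the half-integer phases $(\mp i)^{S}$ recombine into genuine signs and produce exactly the quadratic carry $\binom{S}{2}$; once that is in hand, the graph-theoretic identification above is immediate.
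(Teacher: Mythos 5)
Your proposal is correct and follows essentially the same route as the paper: both expand the state over the computational basis of the qubits other than $a$, track the number $S(x)$ of lit adjacency edges (the paper's $m_x^N$), combine the phases $(\mp i)^{S}(\pm i)^{S\bmod 2}=(-1)^{\binom{S}{2}}$, and identify $\binom{S}{2}\bmod 2$ with the pairwise complementation of the multiset $P$. Your write-up is merely a bit more explicit about the final combinatorial identification and the sign-independence, but there is no substantive difference.
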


\begin{proof}
Without loss of generality we assume that $a=N$ is the last qubit.
Then, we need to fix the following notation. First, let $x$ or $\ket{x}$ 
be an element of the computational basis on the first $N-1$ qubits, $x$ 
can be seen as a string of $0$ and $1$. We say that an edge $e$ acts on 
$x$ if $x$ has  the entries $1$ on the qubits belonging to $e$. 
This means that the phase gate $C_e$ changes the sign of $\ket{x}.$ We 
define $m_x=| \{e \in H | e \mbox{ acts on }x\}|$ as the number edges 
acting on $x$. Since $x$ is defined on $N-1$ qubits only, we also define 
$m^N_x = |\{e \in H | e \mbox{ acts on } \{x, 1_N\} \}|$
as the number of edges that act on $\{x, 1_N\}$. This means that the 
edges act on the basis element, where a 1 is appended
as the state of the last qubit. 

Using this notation, the hypergraph state can be rewritten as: 
$\ket{H}=\sum_x (-1)^{m_x}\big[\ket{x}\otimes(\ket{0}+(-1)^{m_x^N}\ket{1})\big]$
and we can compute:
\begin{align}
\tau^{+}(N) \ket{H} =& \sqrt{X_N}^\pm \underset{e \in \mathcal{A}(N)}{\prod}\sqrt{C_{e}}^\mp \ket{H} 
\nonumber
\\
=& \sqrt{X_N}^\pm \sum_x (-1)^{m_x} (\mp i)^{m^N_x}\Big[\ket{x}\otimes(\ket{0} + (-1)^{m_x^N} \ket{1})\Big] 
\nonumber
\\
=& \sum_x (-1)^{m_x} (\mp i)^{m^N_x} (\pm i)^{m^N_x \tiny{\mbox{ mod 2}}}\Big[\ket{x}\otimes(\ket{0} + (-1)^{m_x^N} \ket{1}) \Big]
\nonumber
\\
=& \sum_x (-1)^{m_x} (\mp i)^{m^N_x-(m^N_x \tiny{\mbox{ mod 2}})}\Big[\ket{x}\otimes(\ket{0} + (-1)^{m_x^N} \ket{1}) \Big]
\nonumber
\\
=& \sum_x (-1)^{m_x} (-1)^{\binom{m_x^N}{2}}\Big[\ket{x}\otimes(\ket{0}+(-1)^{m_x^N}\ket{1})\Big]. 
\label{proofLC:1}
\end{align}
Eq.~$(\ref{proofLC:1})$ shows that the sign flip of 
$\ket{x}\otimes(\ket{0}+(-1)^{m_x^N} \ket{1}$ is defined 
by $\binom{m_x^N}{2}$. This is nothing but the number of 
pairs of edges in $\mathcal{A}(N)$ that act on $x$. This
sign flip is equivalently described, if we apply the $C_e$  
for all the edges $e \in P$ to the hypergraph. As $C_e^2=\eins$, 
this means that edges in multiset $P$ get complemented. 
\end{proof}

Some examples for the application of this rule are given in Fig.~\ref{fig1}.
Note that the map is not always local, since it contains $\sqrt{C_{e}}^\pm$ 
gates that are nonlocal whenever the vertex $a$ is contained in at least one 
edge of cardinality three or more. Thus, this map can change the entanglement 
properties of the state it is applied to. However, in particular structures 
of hypergraphs, the map can be chosen to be applied to multiple vertices in 
a way that the nonlocal gates cancel each other out. Whenever the nonlocal 
gates cancel each other out we can perform the complementation operation 
without applying those canceling gates at all. Thus the resulting hypergraph 
will be obtained by using local operators only. 

Fig.~\ref{fig1}~(b) and (c) 
display two examples where a sequence of local complementations can be implemented 
using only local operators. These are the first examples that demonstrate that 
two hypergraphs, with edges containing more than two qubits, can be equivalent 
under local unitary operators but not under local Pauli operators. Finally, it
should be noted that our rule of local complementation can also be derived from
the general theory given recently in Ref.~\cite{nika}, but our proof is significantly
simpler.

\section{Permutation unitaries and their applications} 

In the previous section we considered the extension of local 
complementation for hypergraph states. In this section we 
investigate a different family of unitary transformations, 
we call them permutation unitaries. These transformations 
permute the vectors of the computational basis. Such permutations 
are obviously unitary and from Eq.~(\ref{eq-hg-equally})
it is clear that they map hypergraph states to hypergraph states, 
so there must be a graphical description.

The simplest example of such a permutation unitary is Pauli-X (or NOT) 
gate, whose action  on a hypergraph state was studied before 
\cite{Qu2013_encoding, Otfried}, see also Fig.~\ref{fig2} for an example.
A nonlocal example of a permutation unitary in two dimensions 
is a CNOT gate, CNOT$_{ab}:\ket{10}\leftrightarrow\ket{11}$. 
An extension to three-qubit is the Toffoli gate,  CCNOT$_{abc}:\ket{110}\leftrightarrow\ket{111}$. Clearly, is is not necessary
to consider all permutations, as for instance any permutation can
be viewed as a sequence of transpositions \cite{leibniz}.
Another possibility is to look for extensions of CNOT gates.  For two-qubit 
permutations considering  one can easily see that NOT and CNOT are enough 
to cover all possible permutations. Additionally, it is known that every 
permutation on $\{0, 1\}^N$ can be realized by means of a reversible 
circuit using the NOT, CNOT and CCNOT basis and at most one ancilla bit 
\cite{Toffoli}. It is possible to derive a graphical rule of how such maps 
transform hypergraph states. Here we give rules explicitly only for the
two-qubit CNOT and its multiqubit extensions, but the methodology can be 
applied to derive any arbitrary permutation unitary if the exact graphical 
transformation is needed.

\begin{figure}
\begin{center}
\includegraphics[width=0.85\columnwidth]{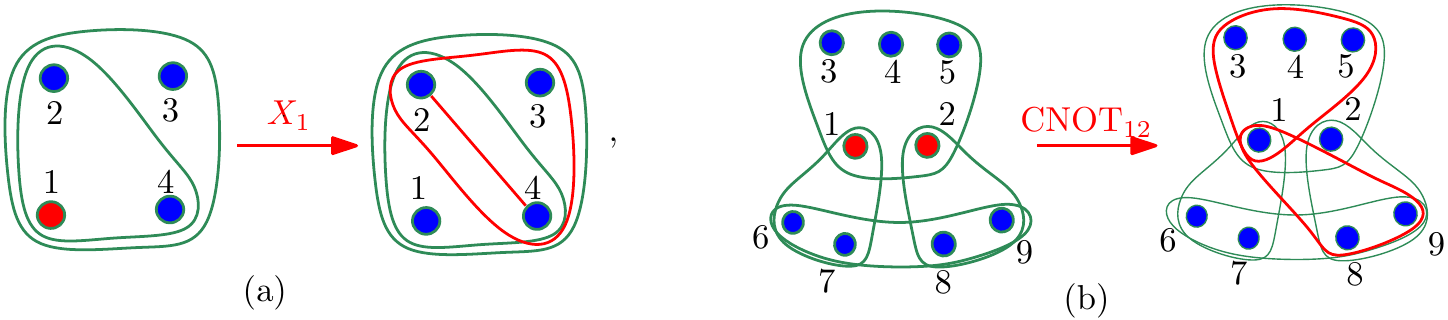}
 \end{center}
\caption{(a) An example of application of Pauli-X or NOT gate on 
the first qubit. We have $\mathcal{A}=\{\{2,4\},\{2,3,4\}\}$.   
(b) An example of application of CNOT$_{12}$, where the first
qubit is the control qubit and the second qubit is the target. 
For the graphical action, we have to consider 
$\mathcal{A}(2)=\{\{1,3,4,5\},\{8,9\}\}$ and in each of its 
elements we have to add the qubit 1. So, the edges
$E_t=\{\{1,3,4,5\},\{1,8,9\}\}$ are added (or removed, if they 
were already present).
}
\label{fig2}
\end{figure}

\begin{lemma}
Applying the  CNOT$_{ct}$ gate on hypergraph state, where $c$ is the 
control qubit and $t$ is the target one, introduces/deletes the  
edges of the form $E_t=\{e_t\cup \{c\}|e_t\in \mathcal{A}(t) \}$.
\end{lemma}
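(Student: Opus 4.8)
The plan is to use that every hypergraph state is fixed by a single $\mathbb{F}_2$-valued phase function, and that a CNOT, being a permutation of the computational basis, merely precomposes this function with the permutation. First I would write $\ket{H}=\sum_x(-1)^{f(x)}\ket{x}$ as in Eq.~(\ref{eq-hg-equally}), and record that for a hypergraph state the phase is the multilinear polynomial $f(x)=\bigoplus_{e\in E}\prod_{v\in e}x_v$ over $\mathbb{F}_2$, each edge $e$ contributing the monomial $\prod_{v\in e}x_v$ (this is just $m_x \bmod 2$ in the notation of Theorem~1). The point to keep in view is that toggling an edge---adding it if absent, deleting it if present---corresponds exactly to XOR-ing its monomial into $f$, since $C_e^2=\eins$.

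Next I would observe that $\mathrm{CNOT}_{ct}$ is an involution on the basis, sending $\ket{z}$ to the string whose target entry is replaced by $z_t\oplus z_c$ and whose other entries are unchanged. Relabelling the summation index in $\mathrm{CNOT}_{ct}\ket{H}$ then shows that the transformed state has phase function $f'(z)=f(\mathrm{CNOT}_{ct}(z))$; equivalently, $f'$ arises from $f$ by the substitution $x_t\mapsto z_t\oplus z_c$, every other variable left alone.

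The core of the argument is to perform this substitution monomial by monomial. An edge $e$ with $t\notin e$ gives a monomial free of $x_t$ and is untouched. An edge $e$ with $t\in e$ gives $x_t\prod_{v\in e\setminus\{t\}}x_v$, which becomes $(z_t\oplus z_c)\prod_{v\in e\setminus\{t\}}z_v=\prod_{v\in e}z_v\oplus z_c\prod_{v\in e\setminus\{t\}}z_v$: the first summand restores the original edge while the second is the monomial of the set $(e\setminus\{t\})\cup\{c\}$. Here I would flag the single place needing care, namely that when $c\in e$ the extra factor is absorbed via $z_c^2=z_c$, so the produced set is still exactly $(e\setminus\{t\})\cup\{c\}$. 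Summing over all edges through $t$ yields $f'=f\oplus\bigoplus_{e_t\in\mathcal{A}(t)}\prod_{v\in e_t\cup\{c\}}z_v$, which is precisely $f$ with the edges $E_t=\{\,e_t\cup\{c\}\mid e_t\in\mathcal{A}(t)\,\}$ toggled, as claimed.

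I do not expect a genuine obstacle, only bookkeeping over $\mathbb{F}_2$: one must handle the case $c\in e$ as above, and note that whenever two distinct elements of $\mathcal{A}(t)$ yield the same set $e_t\cup\{c\}$---which happens exactly when they differ only by the vertex $c$---their monomials cancel in the XOR. Hence $E_t$ must be read with odd/even multiplicity, in the same spirit as the multiset $P$ in Theorem~1; apart from this the statement follows by direct computation.
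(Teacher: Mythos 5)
Your proof is correct, but it follows a genuinely different route from the paper. You work entirely with the Boolean phase function $f(x)=\bigoplus_{e\in E}\prod_{v\in e}x_v$ and observe that a basis permutation acts by precomposition, so $\mathrm{CNOT}_{ct}$ amounts to the substitution $x_t\mapsto z_t\oplus z_c$; expanding each affected monomial then produces exactly the toggled monomials of $E_t$. The paper instead decomposes $\ket{H}$ into the four branches labelled by the values of the control and target qubits, each carrying its own edge set $E_{00},E_{01},E_{10},E_{11}$, and reads off the new edge sets from the fact that the gate swaps the $\ket{10}$ and $\ket{11}$ branches. Your substitution argument is arguably cleaner and scales immediately: replacing $z_c$ by $\prod_{c\in C}z_c$ gives Corollary~3 for the extended $\mathrm{CNOT}_{Ct}$ with no extra work, and the same template handles any affine basis permutation. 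The paper's branch decomposition is more concrete and mirrors the block structure it reuses later (e.g., for the reduced density matrices in Lemma~7). A genuine merit of your write-up is the explicit multiplicity caveat: when two elements of $\mathcal{A}(t)$ differ only by the vertex $c$, their images in $E_t$ coincide and the corresponding monomials cancel over $\mathbb{F}_2$, so that edge is not toggled at all. The paper's formulation of $E_t$ as a set glosses over this (it is handled implicitly by the additive bookkeeping in Eqs.~(\ref{CNOT_1})--(\ref{CNOT_4})), whereas your statement makes the odd/even-multiplicity reading explicit, consistent with the multiset $P$ in Theorem~1.
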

\begin{proof}
Without loss of generality we assume that CNOT$_{12}$ acts on the first 
two qubits. We write a hypergraph state as follows:
\begin{align}
\ket{H}=&\ket{00}\ket{H(E_{00})} \quad 
\quad &  E_{00}=\{e |e\in E, e\cap c =\emptyset, e\cap t =\emptyset\},
\label{CNOT1}
\\
+&\ket{01}\ket{H(E_{00}+E_{01})} \quad 
\quad &  E_{01}=\{e |e\in \mathcal{A}(t), e\cap c =\emptyset\},
\\
+&\ket{10}\ket{H(E_{00}+E_{10})} \quad
\quad &  E_{10}=\{e |e\in \mathcal{A}(c), e\cap t =\emptyset\},
\label{CNOT3}
\\
+&\ket{11}\ket{H(E_{00}+E_{01}+E_{10}+E_{11})} \quad 
\quad &  E_{11}=\{e |e\in \mathcal{A}(\{ c,t\})\}.
\label{CNOT4}
\end{align}
The CNOT$_{12}$ gate swaps $\ket{10}$ and $\ket{11}$, or alternatively 
Eq.~(\ref{CNOT3}) and Eq.~(\ref{CNOT4}), but leaves the other parts 
invariant. Therefore we obtain the following:
\begin{align}
&E_{00}^{\rm new}=E_{00}. 
\label{CNOT_1}  
\\
E_{00}^{\rm new}+E_{01}^{\rm new}=E_{00}+E_{01}    
\quad \Rightarrow \quad 
&E_{01}^{\rm new}=E_{01}.
\label{CNOT_2}
\\
E_{00}^{\rm new}+E_{10}^{\rm new}=E_{00}+E_{01}+E_{10}+E_{11}   
\quad \Rightarrow \quad 
&E_{10}^{\rm new}=E_{01}+E_{10}+E_{11}.
\label{CNOT_3}
\\
E_{00}^{\rm new}+E_{01}^{\rm new}+E_{10}^{\rm new}+E_{11}^{\rm new}=E_{00}+E_{10}    
\quad \Rightarrow \quad 
&E_{11}^{\rm new}=E_{11}.
\label{CNOT_4}
\end{align} 
Equations~(\ref{CNOT_1}-\ref{CNOT_4})  show that  only the edges containing 
the control qubit can appear or disappear. More precisely, Eq.~(\ref{CNOT_3})  
shows that the new edges  that are added/deleted are of the form 
$E_t=\{e_t\cup c | e_t\in\mathcal{A}(t)\}$. 
\end{proof}

An example of this rule is shown in Fig.~\ref{fig2}. We can 
directly generalize this rule to extended CNOT gates, such as the 
Toffoli gate, the proof is essentially the same. 

\begin{corollary}
\label{extendedCNOT}
Applying the extended CNOT$_{Ct}$ gate on a hypergraph state, 
where a set of control qubits $C$ controls the target qubit $t$,  
introduces or deletes the set of edges 
$E_t=\{e_t\cup C|e_t\in\mathcal{A}(t)\}$.
\end{corollary}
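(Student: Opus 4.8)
The plan is to mirror the proof of the Lemma exactly, simply replacing the single control qubit $c$ by the set of control qubits $C$. The essential observation is that the extended CNOT$_{Ct}$ gate acts nontrivially only on those computational basis states in which every qubit in $C$ is set to $1$; on such states it flips the target qubit $t$, and it leaves all other basis states untouched. This is structurally identical to the two-qubit case, where CNOT$_{12}$ swapped only $\ket{10}\leftrightarrow\ket{11}$.

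First I would decompose the hypergraph state according to the values of the control qubits in $C$ and the target qubit $t$. Instead of the four-term expansion used in the Lemma (indexed by the two bits $c,t$), the state now splits into sectors indexed by the joint assignment of the $|C|$ control qubits together with $t$. The two sectors that matter are the one where all of $C$ are $1$ and $t=0$, and the one where all of $C$ are $1$ and $t=1$; call the associated edge sets $E_{C\bar t}$ and $E_{Ct}$, playing the roles of $E_{10}$ and $E_{11}$. All remaining sectors, in which at least one control qubit is $0$, are left invariant by the gate, exactly as $E_{00}$ and $E_{01}$ were. The edge set $E_{Ct}$ is governed by the adjacency $\mathcal{A}(\{C,t\})$ of the full set $C\cup\{t\}$, generalizing $\mathcal{A}(\{c,t\})$.

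Next I would write down the analogue of the sign-tracking equations (\ref{CNOT_1})--(\ref{CNOT_4}), now in terms of symmetric differences of edge multisets, and solve for the new edge sets after the swap. The key equation is the generalization of Eq.~(\ref{CNOT_3}): the sector with all controls $1$ and $t=0$ acquires precisely the edges $E_t=\{e_t\cup C \mid e_t\in\mathcal{A}(t)\}$, because appending the control block $C$ to each element of the target's adjacency is exactly what relates the ``all-controls-one'' sector to the remaining structure. Since $C_e^2=\eins$, these edges are added if absent and removed if present, which is the claimed complementation.

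The only point requiring genuine care — and the main obstacle — is bookkeeping the adjacency sets correctly when $|C|\ge 2$, so that $E_t=\{e_t\cup C\mid e_t\in\mathcal{A}(t)\}$ emerges cleanly rather than with spurious edges from intermediate sectors (those with some but not all controls equal to $1$). I expect these intermediate sectors to be untouched by the gate and to cancel pairwise in the analogues of Eqs.~(\ref{CNOT_2}) and (\ref{CNOT_4}), just as $E_{01}$ and $E_{11}$ did in the Lemma, leaving only the generalization of Eq.~(\ref{CNOT_3}) to contribute. Verifying this cancellation is the crux; everything else is a direct transcription of the two-qubit argument, which is why the text can fairly say ``the proof is essentially the same.''
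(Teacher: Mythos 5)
Your proposal is correct and follows exactly the route the paper intends: the paper itself offers no separate proof beyond the remark that it is ``essentially the same'' as the Lemma, and your sector decomposition indexed by the control block $C$ and target $t$, with only the two all-controls-one sectors being swapped and the triangular system forcing all other sectors' edge sets to remain unchanged, is the faithful generalization of Eqs.~(\ref{CNOT_1})--(\ref{CNOT_4}). The ``cancellation'' you flag as the crux is indeed the only bookkeeping step, and it goes through for the reason you give.
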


Moreover, as mentioned above  every permutation can be constructed
using NOT, CNOT, and CCNOT and at most one ancilla qubit. An ancilla 
qubit is necessary to construct the multiqubit gate set, $\mathcal{T}
= \{{\rm C^{0}NOT}, {\rm CNOT}, \dots,  {\rm C^{k}NOT}\}$ \cite{Xu} 
and the set $\mathcal{T}$ is enough to realize any permutation on
$k$ indices. As $\mathcal{T}$  exactly consists of the gates with 
graphical rules from above, we can state:

\begin{corollary}
Every permutation unitary maps a hypergraph state to a hypergraph state 
and its graphical action can be seen as a composition of rules from 
$\mathcal{T}=\{ {\rm C^{0}NOT}, {\rm CNOT}, \dots ,{\rm  C^{k}NOT} \}$ 
graphical rules.
\end{corollary}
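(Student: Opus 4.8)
The plan is to establish the statement in two logically independent halves, then combine them. First I would show the \emph{closure} property: every permutation unitary maps hypergraph states to hypergraph states. By Eq.~(\ref{eq-hg-equally}), a hypergraph state has the form $\ket{H}=\sum_{x}(-1)^{f(x)}\ket{x}$ with $f(x)\in\{0,1\}$, i.e.\ it is a uniform superposition over the computational basis with $\pm 1$ phases. A permutation unitary $U_\pi$ acts by $\ket{x}\mapsto\ket{\pi(x)}$ for some permutation $\pi$ of $\{0,1\}^N$, so
\begin{equation}
U_\pi\ket{H}=\sum_{x}(-1)^{f(x)}\ket{\pi(x)}=\sum_{y}(-1)^{f(\pi^{-1}(y))}\ket{y},
\end{equation}
which is again of the form~(\ref{eq-hg-equally}) with the new binary function $f'(y)=f(\pi^{-1}(y))$. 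Hence $U_\pi\ket{H}$ is a hypergraph state, proving the first claim.

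The second half is the decomposition of the graphical action. Here I would invoke the stated fact \cite{Toffoli, Xu} that every permutation on $N$ bits can be realized as a finite circuit built from the gates in $\mathcal{T}=\{{\rm C^0NOT},{\rm CNOT},\dots,{\rm C^kNOT}\}$, possibly with one ancilla qubit. Writing such a circuit as $U_\pi=G_r\cdots G_2 G_1$ with each $G_j\in\mathcal{T}$, I would apply the graphical rules already proven above gate by gate: the Lemma handles the ${\rm CNOT}$ case and Corollary~\ref{extendedCNOT} handles every ${\rm C^{m}NOT}$ with $m\geq 1$, while the ${\rm C^0NOT}={\rm NOT}$ case is the Pauli-$X$ rule discussed around Fig.~\ref{fig2}. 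Applying these rules in sequence, starting from the hypergraph of $\ket{H}$, produces after each $G_j$ a well-defined hypergraph, and the final hypergraph is exactly that of $U_\pi\ket{H}$. Thus the graphical action of $U_\pi$ is the composition of the individual graphical rules from $\mathcal{T}$, which is the assertion.

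The one point requiring care is the \emph{ancilla qubit}. The decomposition from \cite{Toffoli} may introduce an auxiliary bit that is initialized to a fixed value and returned to that value at the end of the circuit. I would argue that, because the ancilla starts and ends in a product state disentangled from the data qubits, the hypergraph on the data qubits obtained after tracing out (or simply discarding) the ancilla is well-defined and coincides with the hypergraph of $U_\pi\ket{H}$; the intermediate hypergraphs simply live on the enlarged vertex set $V\cup\{\text{anc}\}$. This bookkeeping is the main obstacle, since one must verify that the ancilla factorizes cleanly at the end so that the composed graphical rules genuinely describe a transformation on the original $N$-vertex hypergraph; the rest is a direct concatenation of the Lemma and its corollary.
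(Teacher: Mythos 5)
Your overall route is the same as the paper's: closure follows from Eq.~(\ref{eq-hg-equally}) because a basis permutation just relabels which basis vectors carry which $\pm 1$ phase, and the graphical statement follows by decomposing the permutation into gates from $\mathcal{T}$ and applying the Lemma and Corollary~\ref{extendedCNOT} gate by gate. That part is fine.

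The genuine problem is your treatment of the ancilla. Your proposed bookkeeping --- run the Toffoli-style circuit on $\ket{H}\otimes\ket{0}_{\rm anc}$ and regard the intermediate states as hypergraphs on the enlarged vertex set $V\cup\{\text{anc}\}$ --- fails at the first step: $\ket{H}\otimes\ket{0}$ is \emph{not} a hypergraph state on $N+1$ qubits, since by Eq.~(\ref{eq-hg-equally}) a hypergraph state must be an equal-weight superposition of \emph{all} computational basis states with $\pm 1$ phases, whereas $\ket{H}\otimes\ket{0}$ has zero amplitude on half of them. Consequently none of the graphical rules you want to compose apply to the intermediate states of an ancilla-assisted circuit, and ``tracing out the ancilla at the end'' does not rescue the gate-by-gate graphical description. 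The paper sidesteps this entirely: the ancilla in the Toffoli construction is only needed to \emph{synthesize} the higher multiply-controlled gates out of $\{{\rm NOT},{\rm CNOT},{\rm CCNOT}\}$; once one takes the full set $\mathcal{T}=\{{\rm C^{0}NOT},\dots,{\rm C^{k}NOT}\}$ as primitives --- each of which already has its own graphical rule via Corollary~\ref{extendedCNOT} --- any permutation of $\{0,1\}^{N}$ can be realized by a circuit over $\mathcal{T}$ with no ancilla at all (this is what the citation to Xu provides; note that the top gate ${\rm C^{N-1}NOT}$ is a single transposition and hence an odd permutation, which is exactly what ancilla-free circuits over the three-gate basis cannot produce). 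If you insist on keeping an ancilla, the statement you would actually need is a \emph{borrowed}-ancilla construction in which the circuit acts as $U_\pi\otimes\mathbbm{1}$ for both ancilla values; then you could initialize the ancilla in $\ket{+}$, making the enlarged state a legitimate hypergraph state with an isolated extra vertex, and your gate-by-gate argument would go through. As written, your proof has a hole precisely at the step you yourself flagged as the main obstacle.
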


It is interesting to note how the different rules change the
cardinality of edges. If $c$ is the cardinality
of the largest edge in the hypergraph, the NOT gate 
can only create/erase edges with a cardinality strictly 
smaller then $c$.  The CNOT gate can create/erase edges 
with cardinality smaller or equal to $c$, but the CCNOT 
can create edges with cardinality higher then $c$. 

\begin{figure}[t]
\begin{center}
\includegraphics[width=0.95\textwidth]{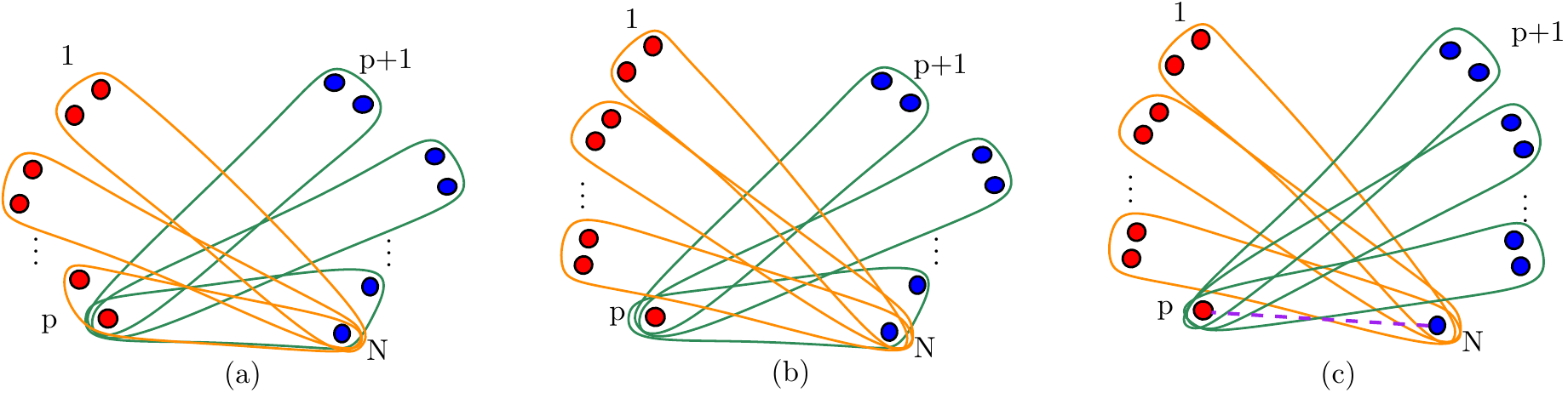}
\end{center}
\caption{Different possibilities of the normal form for complete 
three-uniform hypergraph states. 
(a)  The normal form if $p$ is even and $N$ is even.
(b) The normal form if $p$ is odd and $N$ is odd. 
(c) The normal form if $p$ is odd and $N$ is even. 
We have two cases: 
{(c1)} The hypergraph without the edge $\{p,N\}$. This is the normal 
form if either both  $(p+1)=2 \mod 4$ and $(N-p+1)= 2 \mod 4$ 
or if both $(p+1)=0 \mod 4$ and $(N-p+1)= 0 \mod 4$. 
{(c2)} The hypergraph with the edge $\{p,N\}$.  This is the normal 
form if $(p+1)= 0\mod 4$ and $(N-p+1)=2 \mod 4$ 
or if  $(p+1)=2\mod 4$ and $(N-p+1)=0\mod 4$. 
The edge $\{p,N\}$ is represented by a dashed line.}
\label{fig3}
\end{figure}

Finally, we demonstrate that the rule for the CNOT gate has  
a direct application: Consider a complete three-uniform hypergraph
states, that is, the hypergraph contains all possible three-edges, 
but nothing else. These states can be thought as generalizations 
of GHZ states and violate Bell inequalities in a robust manner 
\cite{gbg}. If one considers a possible bipartition $1,\dots, p|p+1,\dots, N$
of the particles, one may ask how the hypergraph can be simplified using 
unitaries that are local with respect to this bipartition. The following Lemma
provides an answer, and we will use this later for the construction of witnesses.

\begin{lemma}
\label{CNOTREDUCTION}
Consider an $N$-qubit complete three-uniform hypergraph state and
a bipartition $1,\dots, p|p+1,\dots, N$. Then, using only local actions 
with respect to this bipartition the hypergraph can be reduced to 
the form shown in Fig.~\ref{fig3}.  We call this form a normal form 
of complete three-uniform hypergraph state respecting the bipartition 
$1,\dots, p|p+1,\dots, N$.
\end{lemma}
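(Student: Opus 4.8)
The plan is to pass to the phase function of the complete three-uniform hypergraph state and then cancel the crossing edges using operations that are local with respect to the bipartition $A=\{1,\dots,p\}$, $B=\{p+1,\dots,N\}$. Splitting a basis label as $x=(y,z)$ with $a=|y|$ and $b=|z|$, an edge acts on $x$ exactly when its three qubits carry a $1$, so the state is $\sum_{y,z}(-1)^{\binom{a+b}{3}}\ket{y}\ket{z}$. I would first apply Vandermonde's identity modulo $2$, $\binom{a+b}{3}\equiv\binom{a}{3}+\binom{a}{2}b+a\binom{b}{2}+\binom{b}{3}\pmod 2$, which separates the phase into two purely local parts, $\binom a3$ on $A$ and $\binom b3$ on $B$, plus a crossing part. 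Each local part is a sign depending only on $y$ (resp.\ only on $z$), hence implemented by a diagonal unitary on $A$ (resp.\ $B$) that is local with respect to the bipartition; subtracting it deletes all edges of type $AAA$ and $BBB$ and leaves $\sum_{y,z}(-1)^{\binom a2 b+a\binom b2}\ket{y}\ket{z}$.

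Next I would analyze the crossing phase. By Lucas' theorem $a\bmod 2=a_0$ and $\binom a2\bmod 2=a_1$ are the two lowest binary digits of $a$ (and likewise $b_0,b_1$), so the crossing phase equals $a_1b_0+a_0b_1\pmod 2$, where $a_1b_0\equiv\binom a2\,b$ is precisely the sum over all $AAB$ edges and $a_0b_1\equiv a\binom b2$ the sum over all $ABB$ edges. On each side I would now apply an invertible CNOT circuit (control and target in the same block, hence local) that brings the pair consisting of the linear parity $a_0=\bigoplus_{i\in A}y_i$ and the quadratic form $\binom a2=\sum_{i<j\in A}y_iy_j$ to a joint $\mathbb{F}_2$ normal form, and analogously for $b_0,\binom b2$ on $B$. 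By the CNOT rule established above (Corollary~\ref{extendedCNOT}) each gate toggles a controlled family of edges; after the reduction the parity is carried by the single representative qubit $p$ (resp.\ $N$), while the quadratic part splits into a small number of independent products (hyperbolic pairs) plus a possible linear remainder.

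Reading off the surviving crossing edges for each combination of parities then gives the four situations of Fig.~\ref{fig3}. The number of hyperbolic pairs and the value of the linear remainder are governed by the rank, radical and Arf invariant of $\binom a2$ and $\binom b2$ over $\mathbb{F}_2$, which depend only on $p\bmod 2$, $(N-p)\bmod 2$ and the finer residues modulo $4$; this is what produces the split (a)/(b)/(c). In particular the single two-edge $\{p,N\}$ arises from the coupling of the two residual linear remainders and should survive exactly when $\binom p2+\binom{N-p}2$ is odd, which (for $p$ and $N-p$ both odd, the case (c)) is precisely the condition that one of $(p+1),(N-p+1)$ is $0\bmod 4$ and the other is $2\bmod 4$, distinguishing (c2) from (c1).

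The main obstacle is this last reduction: carrying out the two quadratic-form normalizations simultaneously while tracking, modulo $2$, every edge toggled by each CNOT, and correctly isolating the residual linear terms whose product yields the $\{p,N\}$ edge. The parity bookkeeping is delicate because the radical and Arf invariant of $\binom a2$ and $\binom b2$ jump with the parity of $p$ and of $N-p$, and it is exactly these jumps that force the case distinction and the mod-$4$ conditions. Everything preceding this step is routine once Vandermonde's identity, Lucas' theorem and the CNOT rule are in hand.
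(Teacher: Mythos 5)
Your reduction of the problem to the phase polynomial is a genuinely different route from the paper's, and its skeleton is sound: the Vandermonde splitting $\binom{a+b}{3}\equiv\binom a3+\binom a2\,b+a\binom b2+\binom b3\pmod 2$ correctly separates the intra-block edges (removable by diagonal unitaries local to each side) from the crossing part, and identifying $\binom a2\bmod 2$ and $a\bmod 2$ with a quadratic and a linear form in the bits of $y$ is the right dictionary for the $AAB$ and $ABB$ edges. The paper instead works entirely at the level of edges, running an explicit iterative CNOT algorithm: first for the cut $1|2,\dots,N$ (chaining $\mathrm{CNOT}_{i,i+1}$ and then $\mathrm{CNOT}_{i+2,i}$ to pair up the vertices of one side against the single vertex of the other), then for $12|3,\dots,N$, and finally the general $p$ by applying that algorithm independently to $N|1,\dots,p$ and to $p|p+1,\dots,N$. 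Your criterion for the survival of the $\{p,N\}$ edge (that $\binom p2+\binom{N-p}2$ be odd) does agree with the paper's mod-4 conditions distinguishing (c1) from (c2), which is a good consistency check on the framework.

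However, as you yourself flag, the decisive step is missing, and it is not a routine omission: the entire content of the lemma is the explicit normal form of Fig.~\ref{fig3} with its case structure, and that is exactly what your ``simultaneous normalization of $(q,\ell)$ on each side'' is supposed to deliver. To close the gap you need (i) the classification of the pair $\bigl(\binom a2,\,a\bigr)$ as (quadratic form, linear form) on $\mathbb{F}_2^{p}$ under $GL(p,\mathbb{F}_2)$ --- the rank and radical of the polar form $J+I$ (nondegenerate for $p$ even, radical spanned by the all-ones vector for $p$ odd), the value of the quadratic form on the radical, the Arf invariant, and the position of $\ell$ relative to the radical; (ii) the verification that a single CNOT circuit on side $A$ simultaneously sends $\ell_A$ to the coordinate $y_p$ and $q_A$ to the hyperbolic pairs of the figure plus the correct linear remainder, and likewise on side $B$; and (iii) the translation of each resulting product $q_A\ell_B+\ell_A q_B$ back into the edge sets (a), (b), (c1), (c2). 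None of this is carried out, and it is precisely here that the parities of $p$ and $N-p$ and the residues modulo 4 enter. Until that computation is done the proposal is a plausible programme rather than a proof; completing it would, however, yield an argument more conceptual than the paper's gate-by-gate reduction, so it is worth finishing.
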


\begin{proof}
The proof consist of an application of a sequence of CNOT gates on both sides
of the bipartition. Details are given in the Appendix. 
\end{proof}

\section{Construction of witnesses}
In this section we consider the construction of entanglement witnesses 
as an application of the results derived so far. More specifically, we construct 
tight witnesses for fully-connected three-uniform hypergraph states. These states 
are of special interest, as it has been shown that they violate Bell inequalities 
with an exponentially increasing amount 
and the violation is robust against particle loss. The Bell inequalities can be used to prove 
that there is some entanglement in the state, but in this section
we will focus on entanglement witnesses for genuine multiparticle entanglement. 

An entanglement witness is an observable which has a non-negative 
expectation value for all separable states, thus, a negative expectation
value signals the presence of entanglement. There are many ways to construct 
entanglement witnesses, see Ref.~\cite{witness} for an overview. One possible
way  to design a witness for a general state $\ket{\psi}$ is to consider the 
following observable
\begin{equation}
\mathcal{W}= \alpha \mathbbm{1} - \ketbra{\psi},
\end{equation}
where $\alpha$ is the maximal overlap between the state $\ket{\psi}$ and 
the pure  biseparable states. This can be computed by the maximal squared 
Schmidt coefficient occurring when computing the Schmidt decomposition with
respect to all bipartitions,
\begin{equation}
\alpha= \max_{\rm bipartitions} 
\big\{
\max_{\lambda_k^{\rm BP}}  \{[\lambda_k^{\rm BP}]^2\}\big\}.
\end{equation} 
For usual graph states the witness can be determined in the following way
\cite{grwitness}: First, for any bipartition one can generate a Bell pair
between the two parties by making only local operations with respect to
this partition. During local operations, however, the maximal Schmidt coefficient
can only increase. This proves directly that for any bipartition 
$\lambda_k^{\rm BP} \leq 1/2$, so $\mathcal{W}= \mathbbm{1}/2 - \ketbra{G}$
is a witness. Using a similar construction, we can estimate $\alpha$
and write down a witness for three-uniform hypergraph states. Note 
that this scheme of constructing witnesses has recently been extended to 
other hypergraph states \cite{newrossi}.
 
\begin{theorem} 
For any three-uniform hypergraph state $\ket{H^3_N}$ the 
operator
\begin{equation}
\label{genwitness}
\mathcal{W}= \frac{3}{4} \mathbbm{1} - \ketbra{H^3_N}
\end{equation}
is an entanglement witness detecting this state. 
\end{theorem}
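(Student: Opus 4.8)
The plan is to bound the maximal squared Schmidt coefficient $\alpha$ over all bipartitions and show it equals $3/4$, so that $\WW = \tfrac{3}{4}\eins - \ketbra{H^3_N}$ is a valid witness. Following the graph-state strategy quoted just above, for a fixed bipartition $1,\dots,p\,|\,p+1,\dots,N$ the idea is to apply local (with respect to the bipartition) unitaries that simplify the state without ever decreasing the maximal Schmidt coefficient; since such operations can only increase $\lambda_{\max}$, any upper bound obtained after simplification is also an upper bound for the original state. The key tool is Lemma~\ref{CNOTREDUCTION}: using only bipartition-local CNOT gates, the complete three-uniform hypergraph state is brought to one of the normal forms of Fig.~\ref{fig3}, so it suffices to compute the maximal Schmidt coefficient for each of these normal forms.

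First I would invoke Lemma~\ref{CNOTREDUCTION} to replace $\ket{H^3_N}$ by its normal form across the chosen cut. Because the reduction is implemented by permutation unitaries local to each side, the Schmidt rank and in particular $\lambda_{\max}$ across this bipartition is preserved (CNOTs local to a side are genuine local unitaries, not merely Schmidt-coefficient-increasing). Hence $\alpha$ for the original state equals the bipartition-wise maximum of $\lambda_{\max}^2$ over the normal forms. Next I would compute the reduced density matrix of one side of the normal form and read off its largest eigenvalue. The normal forms are very sparse — after reduction most of the cross-cut structure collapses onto the two ``boundary'' qubits $p$ and $N$ together with at most a single residual edge $\{p,N\}$ — so the relevant entanglement lives effectively on a small subsystem and the reduced state is block-diagonal with small blocks. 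I would diagonalize these small blocks explicitly for each of the cases (a), (b), (c1), (c2) of Fig.~\ref{fig3}, distinguished by the parities of $p$ and $N$ modulo $4$.

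The main obstacle will be the case analysis: one must check all parity cases of Fig.~\ref{fig3} and verify that in each the largest eigenvalue of the single-side reduced density matrix is bounded by $3/4$, with the bound actually attained (so that the witness is tight and genuinely detects $\ket{H^3_N}$). Concretely, I expect the worst case to be a normal form whose cross-cut is governed by a two- or three-qubit effective state, for which the reduced density operator has an eigenvalue exactly $3/4$; the even/odd and mod-$4$ distinctions change which small state appears but not the extremal eigenvalue. Finally, I would note that since the bound holds for every bipartition with the same value $3/4$, we get $\alpha = 3/4$, and therefore $\mean{\WW}_\rho \geq 0$ for all biseparable $\rho$ while $\mean{\WW}_{H^3_N} = 3/4 - 1 < 0$, establishing that $\WW$ is an entanglement witness detecting the state.
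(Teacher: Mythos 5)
There is a genuine gap: your central tool, Lemma~\ref{CNOTREDUCTION}, applies only to the \emph{complete} three-uniform hypergraph state, whereas the theorem is stated for \emph{any} (connected) three-uniform hypergraph state $\ket{H^3_N}$. A generic three-uniform hypergraph cannot be brought to the normal forms of Fig.~\ref{fig3} by bipartition-local CNOTs, so the case analysis you propose never gets off the ground for the states the theorem actually covers. The normal form is the right tool for the \emph{subsequent} theorem (the improved witness for the complete state), not for this one. A secondary problem: even restricted to the complete state, your expectation that the worst-case reduced density matrix has largest eigenvalue exactly $3/4$ contradicts Lemma~\ref{schmidtlemma}, which shows $\lambda_1,\lambda_2,\lambda_3$ are all strictly below $3/4$ for $N\geq 4$ --- that is precisely why the improved witness exists. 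Tightness is not needed here anyway: ``detecting the state'' only requires $\alpha<1$, so $\langle\mathcal{W}\rangle_{H^3_N}=3/4-1<0$ is automatic once $\alpha=3/4$ is shown to upper-bound all biseparable overlaps.

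The paper's proof is different and works in full generality. For a fixed bipartition one first deletes, by unitaries local to each side, every edge lying entirely within one side; connectedness guarantees that some three-edge $e$ still crosses the cut. Then one measures all qubits outside $e$ in the Pauli-$Z$ basis; every outcome leaves, up to local unitaries, the three-qubit hypergraph state with the single edge $e$, whose maximal squared Schmidt coefficient is $3/4$. Since local operations and measurements can only \emph{increase} the maximal squared Schmidt coefficient (equivalently, the geometric measure of entanglement is an entanglement monotone), the original state satisfies $\max_k[\lambda_k^{\rm BP}]^2\leq 3/4$ for every bipartition, which is all that is needed. If you want to salvage your approach, you could use it (together with Lemma~\ref{schmidtlemma}) for the complete state only, but for the general statement you need an argument, like the measurement one, that does not rely on the normal form.
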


\begin{proof}
The proof is similar to the one for graph states,  but in this case the aim 
is to share a the three-qubit hypergraph state between the bipartition.  For 
this state the maximal squared Schmidt coefficient is $\alpha={3}/{4}$. 
Given  an $N$-qubit three-uniform state  we consider a bipartition 
$1,\dots, p | p+1,\dots, N$. One can get rid of any edge which entirely 
belongs to either side of the bipartition. Since the graph is assumed to
be connected at least one three-edge remains shared between the two parts. 
Without loss of generality we can assume that this edge is $e=\{p-1,p,p+1\}$. 
Now by making measurements in the Pauli-Z basis on every qubit except these 
three in $e$ we can disentangle all the qubits from the main hypergraph except  
$\{p-1,p,p+1\}$. For all possible measurement results, i.e. with probability 
one the resulting state is, up to local unitarians, a three-qubit hypergraph 
state  consisting only of the edge $e$.
\end{proof}

The previous witness can be used for any connected three-uniform hypergraph
state, but is it no necessarily tight. For the special case of complete
three-uniform states, where any possible three-edge is present, we derive
a better witness in the following. Since this state is symmetric, the Schmidt
coefficients depend only on the size of the partitions.

\begin{lemma}
\label{schmidtlemma}
Consider an $N$-qubit complete three-uniform hypergraph state. Then, 
the maximal squared Schmidt coefficient with respect to the bipartition
$1$ vs.~$N-1$ qubits, is 
\begin{align}
\lambda_1 &=\frac{1}{2} \mbox{ if } N=4k,
\nonumber
\\
\lambda_1&=\frac{1}{2}+\frac{1}{2^{(N+1)/2}} \mbox { if } N=4k+1 \mbox{ or } N=4k+3, 
\nonumber
\\
\lambda_1&=\frac{1}{2}+\frac{1}{2^{N/2}} \mbox{ if } N=4k+2. 
\end{align}
For the $2$ vs.~$N-2$ partition it is  
$\lambda_{2}=\frac{1}{8} (3 + \frac{\sqrt{2^{N+6}+4^N}}{2^N})$. 
For the $3$ vs.~$N-3$ partition it is given by 
$\lambda_3=\frac{9}{16}$ if $N=6$ and for 
$N>6$ one has $\lambda_3<\frac{1}{2}$.
\end{lemma}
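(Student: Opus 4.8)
The plan is to work entirely in the computational basis. By Eq.~(\ref{eq-hg-equally}), the complete three-uniform state is $\ket{H^3_N} = 2^{-N/2}\sum_{x}(-1)^{\binom{w(x)}{3}}\ket{x}$, where $w(x)$ is the Hamming weight of $x$, because the number of three-edges acting on $x$ is exactly $\binom{w(x)}{3}$. The first and most useful step is to simplify the phase: by Lucas' theorem $\binom{w}{3}$ is odd precisely when $w\equiv 3\pmod 4$, so the amplitude is governed by $g(w):=[\,w\equiv 3\bmod 4\,]$, a sign depending only on $w\bmod 4$. This periodicity is what renders all subsequent sums explicitly computable.

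Next I would exploit the permutation symmetry of the state. For a bipartition with $k$ qubits in part $A$, the coefficient vector on $B$ attached to $\ket{a}_A$ depends only on $w(a)$; call it $v_{w(a)}$. Hence $\langle a|\rho_A|a'\rangle = 2^{-N}G_{w(a),w(a')}$ with $G_{ww'}=\sum_{b}(-1)^{g(w+w(b))+g(w'+w(b))}$, and all nonzero eigenvalues of $\rho_A$ already lie in the span of the symmetric weight vectors $u_w\propto\sum_{w(a)=w}\ket{a}$. In that basis the problem collapses to diagonalizing the $(k+1)\times(k+1)$ matrix $\tilde M_{ww'}=\sqrt{\binom kw\binom k{w'}}\,G_{ww'}$, with $\lambda_k=\|\tilde M\|/2^N$. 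Each $G_{ww'}$ is a weighted binomial sum whose sign pattern has period four in $w(b)$, which I would evaluate by expanding the pattern in fourth roots of unity and using $\sum_j\binom{m}{j}\omega^j=(1+\omega)^m$; this yields closed forms of the type $2^{(m+1)/2}\cos(\cdots\pi/4)$. A recurring consequence is that the two nonvanishing off-diagonal sums differ in phase by $\pi/2$, so their squares add to $2^{N-k+1}$.

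I would then treat $k=1,2,3$ in turn. For $k=1$ the matrix is $2\times2$, so $\lambda_1=\tfrac12\big(1+|c|\big)$ with $c$ the single off-diagonal entry; evaluating the $+,+,-,-$ sum gives $\lambda_1=\tfrac12+2^{-N/2}\lvert\cos\tfrac{(N-2)\pi}{4}\rvert$, whose four values over $N\bmod 4$ are exactly the stated formulas. For $k=2$ the antisymmetric combination of the two weight-one states is annihilated, leaving a $3\times3$ matrix whose characteristic polynomial factors off a linear term; the remaining quadratic, together with $P^2+Q^2=2^{N-1}$, gives precisely $\lambda_2=\tfrac18\big(3+2^{-N}\sqrt{2^{N+6}+4^N}\big)$. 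For $k=3$ I would locate an extra involution $K$ sending $\ket{0}\leftrightarrow\ket{3}$ and $\ket{1}\leftrightarrow-\ket{2}$ that commutes with $\tilde M$; this block-diagonalizes $\tilde M$ into two $2\times2$ blocks with diagonal entries $D\mp Q$ and $3(D\mp Q)$ and off-diagonal $\sqrt3\,P$, where $D=2^{N-3}$. The eigenvalues are $2(D\mp Q)\pm\sqrt{(D\mp Q)^2+3P^2}$, and since $s\mapsto 2s+\sqrt{s^2+3P^2}$ is increasing the maximum is $\lambda_{\max}(\tilde M)=2(D+|Q|)+\sqrt{(D+|Q|)^2+3P^2}$.

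The final step, and the main obstacle, is extracting the two claimed facts from this formula. Substituting $N=6$ gives $D=8$, $P=0$, $|Q|=4$, hence $\lambda_{\max}=36$ and $\lambda_3=36/64=9/16$. For $N>6$ I must show $\lambda_{\max}<4D=2^{N-1}$; squaring the equivalent inequality $\sqrt{(D+t)^2+3P^2}<2D-2t$ with $t=|Q|$ (legitimate once $t<D$, which holds for $N>6$) reduces everything to $3D^2-10Dt+6t^2-6D>0$. Using $P^2+Q^2=2^{N-2}$, the value of $t^2=Q^2$ equals $0$, $D$, or $2D$ according to $N\bmod 4$, so in each residue class the inequality becomes an elementary bound in $D$ alone (for instance $3D+6>10\sqrt{2D}$ when $N\equiv 2$). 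The delicate point is that the single genuinely tight family is $N\equiv 2\pmod 4$, where the inequality fails exactly at $N=6$ and first holds at $N=10$; verifying the smallest surviving instance in each residue class, and monotonicity thereafter, is what pins down $N=6$ as the unique exception.
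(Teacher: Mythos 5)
Your proposal is correct, and it follows the same basic route as the paper --- compute the reduced density matrix for each cut and evaluate its largest eigenvalue, with the off-diagonal entries given by binomial sums whose signs have period four in the Hamming weight and which are summed via $(1+\omega)^m$ (the paper writes these as $\mathrm{Re}[(1+i)^{m}]\pm \mathrm{Im}[(1+i)^{m}]$). Where you genuinely differ is in the organization and in the rigor of the last case. The paper writes out the full $2\times 2$, $4\times 4$ and $8\times 8$ matrices explicitly, reads off the eigenvalues (using the repeated rows to lower the rank), and for the $3$ vs.~$N-3$ cut simply lists four closed-form expressions for $\lambda_3$ by residue class and \emph{asserts} that $\lambda_3$ decreases with $N$ and exceeds $1/2$ only at $N=6$. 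You instead compress each reduced state to a $(k+1)\times(k+1)$ matrix on the symmetric weight subspace, and for $k=3$ find the additional involution $K$ that splits it into two $2\times 2$ blocks with a uniform closed form $2(D\mp Q)\pm\sqrt{(D\mp Q)^2+3P^2}$; this lets you replace the paper's case-by-case assertion with the single explicit inequality $3D^2-10Dt+6t^2-6D>0$, checked per residue class, which correctly isolates $N=6$ (via $3D+6>10\sqrt{2D}$ failing at $D=8$ and holding from $D=128$ on). I verified the block structure, the identity $P^2+Q^2=2^{N-k+1}$, and the numerics at $N=6$ ($D=8$, $P=0$, $|Q|=4$, giving $36/64=9/16$); everything checks out. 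Your treatment buys a cleaner derivation and an actual proof of the $\lambda_3<1/2$ claim for $N>6$, at the cost of having to exhibit and justify the symmetry $K$; the paper's version is more direct to write down but leaves the final monotonicity step to the reader.
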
	
\begin{proof}
The proof is done by tracing out the parties and calculating the Schmidt 
coefficients as eigenvalues of the reduced states. Details can be found
in the Appendix.
\end{proof}

\begin{theorem}
An improved witness for the $N$-qubit complete three-uniform hypergraph 
state $\ket{H_N^{\rm c3}}$ is given by
\begin{equation}
\mathcal{W}=\alpha \mathbbm{1} -\ketbra{H_N^{\rm c3}},
\end{equation}
where $\alpha=\max\{\lambda_1, \lambda_2\}$.
\end{theorem}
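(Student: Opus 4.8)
The plan is to fall back on the witness criterion established at the beginning of this section: $\WW=\alpha\eins-\ketbra{H_N^{\rm c3}}$ detects genuine multiparticle entanglement exactly when $\alpha$ is the largest squared Schmidt coefficient of $\ket{H_N^{\rm c3}}$ over all bipartitions. The complete three-uniform hypergraph state is permutation invariant, so its Schmidt spectrum across a cut depends only on the size $p$ of the smaller block; writing $\lambda_p$ for the maximal squared Schmidt coefficient of the $p$ vs.\ $N-p$ cut, I therefore only have to maximise over $p\in\{1,\dots,\lfloor N/2\rfloor\}$. Since $\lambda_1$ and $\lambda_2$ are attained on genuine cuts, the inequality $\alpha\ge\max\{\lambda_1,\lambda_2\}$ is free, and the content of the theorem is the reverse bound $\lambda_p\le\max\{\lambda_1,\lambda_2\}$ for all $p\ge3$.

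For $p=3$ this is immediate from Lemma~\ref{schmidtlemma}: one has $\lambda_3=9/16$ at $N=6$, where $\lambda_1=5/8>9/16$, while $\lambda_3<1/2\le\lambda_1$ for $N>6$. The essential step is thus to control $\lambda_p$ for $p\ge4$, and here I would use Lemma~\ref{CNOTREDUCTION}. The reduction to the normal form of Fig.~\ref{fig3} is carried out by CNOT gates acting separately within each block, hence it is a product of local unitaries with respect to the cut and leaves every Schmidt coefficient unchanged. Consequently $\lambda_p$ equals the maximal squared Schmidt coefficient of the corresponding normal form, which I would analyse in place of the original state.

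The point I would then exploit is that in each normal form the edges lying entirely inside one block act as local phase gates and factor out, so the entanglement across the cut is carried solely by the residual boundary edges at the vertices $p$ and $N$ (together with the optional edge $\{p,N\}$) displayed in Fig.~\ref{fig3}; every qubit touched by no crossing edge stays in $\ket{+}$ and splits off as a product factor. Tracing out these product qubits, the cut reduces to an effective state on a bounded set of boundary qubits, whose maximal squared Schmidt coefficient I can evaluate directly. The main obstacle is precisely this bookkeeping: one must run through the parities of $p$ and $N$ and the residues modulo $4$ that govern the presence of $\{p,N\}$, verify in each case that the normal form factorises as claimed, and confirm that the surviving boundary state never produces a squared Schmidt coefficient exceeding $\max\{\lambda_1,\lambda_2\}$. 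Combining the small-$p$ values of Lemma~\ref{schmidtlemma} with this boundary bound for $p\ge4$ then yields $\alpha=\max\{\lambda_1,\lambda_2\}$, and the witness property follows.
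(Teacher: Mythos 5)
There is a genuine gap, and it sits exactly where the real work of the theorem lies. Your framing (permutation invariance, reduction to $\lambda_p\le\max\{\lambda_1,\lambda_2\}$ for $p\ge3$, the $p=3$ case from Lemma~\ref{schmidtlemma}, and the observation that the CNOT reduction of Lemma~\ref{CNOTREDUCTION} is local for the cut and hence Schmidt-preserving) matches the paper. But your description of the normal form is wrong. In Fig.~\ref{fig3} there are no edges lying entirely inside one block: \emph{every} edge of the normal form crosses the cut, being of the form $\{p,i,i+1\}$ with $i,i+1$ in the second block, $\{N,j,j+1\}$ with $j,j+1$ in the first block, or $\{p,N\}$. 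These edges cover essentially all $N$ vertices, so no qubit ``stays in $\ket{+}$ and splits off,'' and the cut does \emph{not} reduce to an effective state on a bounded set of boundary qubits. (What is true is that the Schmidt rank across the cut is at most $4$, governed by the values of qubits $p$ and $N$, but the Schmidt \emph{coefficients} still depend on Gram matrices of non-orthogonal block states involving all interior qubits, and you have not computed them; you explicitly defer this as ``bookkeeping.'') So the key bound $\lambda_p<1/2$ for $p\ge4$ is not established.

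The paper closes this gap differently: instead of trying to factorize the normal form, it applies local Pauli-$Z$ measurements to disentangle interior qubits and uses the fact that the maximal squared Schmidt coefficient (one minus the geometric measure) cannot decrease under local operations, even probabilistically. For the normal forms of Fig.~\ref{fig3}~(b) and (c) without the edge $\{p,N\}$, measuring qubits $1,\dots,p-3$ reduces the cut to a $3$ vs $N-3$ configuration covered by Lemma~\ref{schmidtlemma}; with the edge $\{p,N\}$ present one reduces to a Bell pair, giving $\lambda\le1/2$; and for Fig.~\ref{fig3}~(a) one keeps the eight qubits of Fig.~\ref{fig4} and averages the maximal squared Schmidt coefficients over the $Z$-measurement outcomes with their probabilities, obtaining the bound $\lambda\le\frac{1}{4}\cdot\frac{1}{4}+\frac{1}{4}\cdot\frac{1}{2}\cdot\frac{1}{8}(3+\sqrt{5})+\frac{1}{2}\cdot\frac{1}{8}(3+\sqrt{2})\approx0.42<\frac12$. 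Some argument of this kind (or an honest diagonalization of the rank-$4$ Gram structure) is indispensable; without it your proof does not go through.
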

\begin{proof}
We have to show that in general it is sufficient to consider the $1$ vs.~$N-1$
and $2$ vs.~$N-2$ partitions. First, the $3$ vs.~$N-3$ give only smaller Schmidt
coefficients, as can be seen from the Lemma~\ref{schmidtlemma}.
For any other $1,\dots, p|(p+1),\dots,N$ bipartition with $p>3$  we 
use  the normal form in Fig.~\ref{fig3}. If a resulting hypergraph is reduced either 
to Fig.~\ref{fig3} (b) or (c) [without the dashed edge], then on qubits $1\dots (p-3)$ 
the measurements in the Pauli-Z basis can be made. As a result, the hypergraph state 
$(p-2),(p-1),p|(p+1)\dots N$ 
is obtained. We know from the Lemma~\ref{schmidtlemma} that the $3$ vs.~$N-3$ 
partition has a largest squared Schmidt coefficient less than $1/2$ (unless $N=6$).  
Keeping in mind that measurements can never decrease the squared maximal Schmidt coefficient, 
we reach the conclusion that  the bipartition $1,\dots, p|(p+1),\dots ,N$  cannot contribute to
the maximal Schmidt coefficient when $p \geq 3$. If in the normal form in Fig.~\ref{fig3} (c)
the dashed edge is present, one can make measurements on both sides of the partition to reduce 
the state to a Bell state between qubits $p$ and $N$. This clearly gives a squared Schmidt
coefficient $\lambda \leq 1/2.$

\begin{figure}
\includegraphics[width=0.77\textwidth, center]{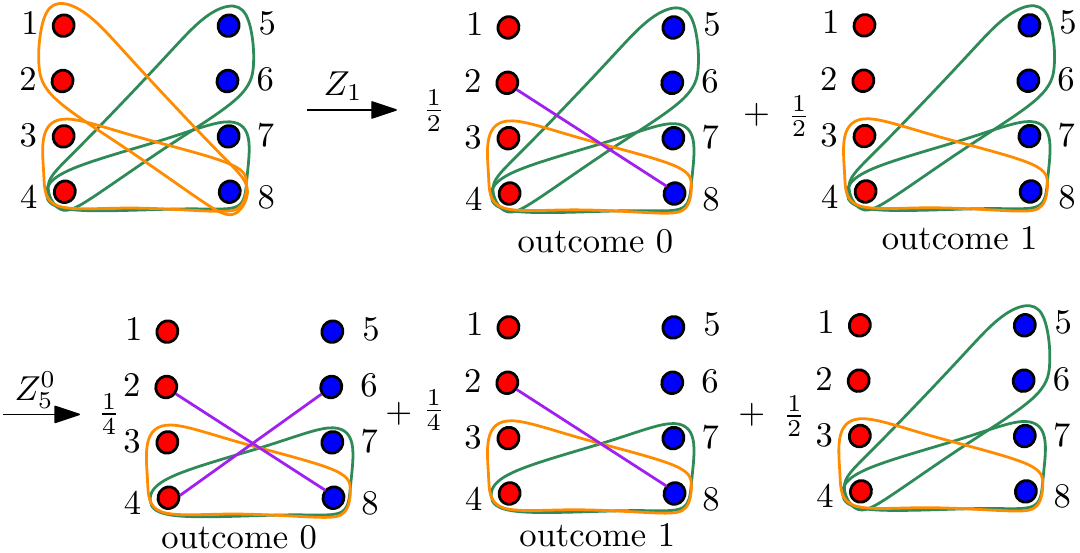}
\caption {Estimation of the Schmidt coefficient for a $4$ vs.~$4$ bipartition. 
See the text for further details.
}
\label{fig4}
\end{figure}

The final case is the state with a normal form in Fig.~\ref{fig3} (a).  Here the 
strategy is as follows: Pauli-Z measurements are made on every qubit but eight of
them, namely the qubits $p-3, p-2, \dots ,p+3,p+4$ remain untouched. This leaves 
us with the state given in Fig.~\ref{fig4}, where the qubits have been relabeled.
Then, a Pauli Z measurement is made on qubit 1. With probability $1/2$  (in case 
of outcome 0),  the edge $\{2,8\}$ is introduced and qubit $1$ is disentangled. 
With probability $1/2$ (outcome $1$) both qubits $1$ and $2$ are disentangled. 
For the first case ($0$ outcome), we again make a Pauli-Z measurement on qubit 
$5$, denoted by $Z_5^0$.  This itself gives two possible outcomes  with half-half 
probabilities, the outcome $0$  gives the edge $\{4,6\}$ and disentangles qubit $5$ 
and  the outcome $0$ disentangles qubits $4$ and $5$. Putting all measurement outcomes 
together with corresponding probabilities  yields as a bound on the Schmidt coefficient
\begin{equation}
\label{boundary}
\lambda \leq 
\frac{1}{4}\cdot\frac{1}{4}
+\frac{1}{4}\cdot\frac{1}{2}\cdot \frac{1}{8}(3+\sqrt{5})
+ \frac{1}{2}\cdot\frac{1}{8}(3+\sqrt{2}) \approx 0.420202 < \frac{1}{2}.
\end{equation}
Note that in this estimation it was used that one minus the largest squared Schmidt 
coefficient can be viewed as the geometric measure of entanglement for this partition, 
and this measure decreases under local operations even for mixed states. 
\end{proof}

\section{Conclusions}

In summary, we have extended the local complementation rule from graph 
states to hypergraph states. We also described the action of different 
gates on hypergraph states with graphical rules. Already for five qubits 
we showed with a simple example that local Pauli operations only are not 
enough to exhaust all local unitary equivalence classes of hypergraph states. 
Based on the rule for the CNOT gate, we developed a normal form for bipartitions 
of complete three-uniform hypergraph states. Based on this, we  derived 
entanglement witnesses for these states.  

There are several directions in which our work can be extended. First, it 
would be highly desirable to develop a general theory for entanglement 
witnesses for hypergraph states, similar to the existing theory for graph 
states \cite{witness}. Here, notions of the coulourability of a hypergraph 
may be developed to characterize how many measurements are needed to 
estimate the fidelity of a state. All this can help to observe hypergraph 
states experimentally. Another interesting question concerns the extent 
to which hypergraph states and their correlations can be simulated 
classically in an efficient manner. Our findings show that certain unitary 
operations have a graphical interpretation. This may be useful to decide 
whether their classical simulation is feasible. For graph states, the 
Gottesman-Knill theorem characterizes a set of operations that can be 
simulated efficiently and it would be highly desirable to identify 
similar operations for hypergraph states.

We thank Cornelia Spee for discussions. This work has been supported 
by the DFG and the ERC (Consolidator Grant 683107/TempoQ).
Additionally, MG would like to acknowledge funding from the Gesellschaft 
der Freunde und F\"orderer der Universit\"at Siegen.

\section{Appendix}

\subsection{Reduction of three uniform hypergraph states to the normal form 
in Lemma 5}

We prove Lemma 5 by considering first simple bipartitions, where the strategy
of the proof is easier to explain. The proof for the general case then follows
the same lines. 

\begin{lemma}
\label{1|N-1}
Consider the bipartition $1|2,3,\dots,N$ for an $N$-qubit complete three-uniform 
hypergraph state.  Then this state is locally (for the given bipartition) equivalent  
to the three-uniform hypergraph state where every vertex is contained in only one 
edge and edges are of the form: 
$E=\{\{1,i,i+1\}\ |\ 2\leq i< N \mbox{ and $i$ is even.}  \}$ 
And only if $N=4k$, an additional cardinality two edge appears, which is $\{1,N\}$. 
\end{lemma}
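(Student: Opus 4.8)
The plan is to first peel off every edge that does not touch qubit~$1$, and then to reshape the remaining ``star'' hypergraph using controlled-NOT gates acting entirely inside the block $\{2,\dots,N\}$, which are local for the cut $1\,|\,2,\dots,N$. First I would delete all edges $e\subseteq\{2,\dots,N\}$: for the complete three-uniform state these are exactly the three-edges avoiding qubit~$1$, and each phase gate $C_e$ is supported on the large block, hence a unitary local with respect to the bipartition. Applying their product leaves a hypergraph all of whose edges contain qubit~$1$. Such a hypergraph is encoded by a graph $L$ on $\{2,\dots,N\}$ (put $\{j,k\}\in L$ iff $\{1,j,k\}\in E$), together with a set $M$ of \emph{marked} vertices (a mark at $j$ records the two-edge $\{1,j\}$) and a global sign (the one-edge $\{1\}$). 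Initially $L=K_{N-1}$ and $M=\emptyset$.

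Next I translate the CNOT rule into this encoding. By the CNOT graphical rule, $\mathrm{CNOT}_{ct}$ with $c,t\in\{2,\dots,N\}$ toggles, for every $e_t\in\mathcal{A}(t)$, the edge $e_t\cup\{c\}$; unwinding this on $L$ it is precisely the symmetric row/column operation that adds the neighbourhood of $t$ to that of $c$ (over $\mathbb{F}_2$, $A\mapsto EAE^{\mathsf T}$ with $E=\mathbbm{1}+e_ce_t^{\mathsf T}$ acting on the adjacency matrix $A$ of $L$), while flipping the mark of $c$ by the parity $[c\sim_L t]+[t\in M]$. Since such elementary congruences generate the full congruence action of $GL_{N-1}(\mathbb{F}_2)$, and since the $\mathbb{F}_2$-rank is a complete invariant for alternating (zero-diagonal) forms, $K_{N-1}$ can be carried by a sequence of these gates to the consecutive-pair matching $\{2,3\},\{4,5\},\dots$: the rank of $J-\mathbbm{1}$ is $N-1$ when $N$ is odd (a perfect matching on $2,\dots,N$) and $N-2$ when $N$ is even (a matching on $2,\dots,N-1$ leaving vertex $N$ isolated), exactly matching the claimed edge set $\{\{1,i,i+1\}\mid i\text{ even}\}$. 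I would spell out the explicit peeling sequence—repeatedly use one member of a pair as control to strip its partner to a single incident edge—so the construction is algorithmic rather than merely existential.

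The delicate point, and the main obstacle, is the bookkeeping of the marks. Each peeling CNOT also flips a mark, and a two-edge $\{1,j\}$ is a genuine $CZ$ across the cut, so it can never be erased, only made to cancel. I would track the accumulated marks and clear them in pairs, if convenient with the help of the local edge-pair complementation $\tau(1)$ of the first theorem, which is itself local for this bipartition because every edge incident to qubit~$1$ restricts to the block $\{2,\dots,N\}$ and hence each $\sqrt{C_e}$ acts only on that block. The surviving mark is then controlled by a single parity that splits into residue classes $\bmod\,4$: for $N$ odd and for $N=4k+2$ all marks cancel, whereas for $N=4k$ exactly one mark remains, on vertex $N$, producing the extra cardinality-two edge $\{1,N\}$.

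As a consistency check I would compare the final form against the $1\,|\,N-1$ Schmidt coefficient of Lemma~\ref{schmidtlemma}. Writing $\ket{H}=\tfrac{1}{\sqrt2}(\ket{0}\ket{\psi_0}+\ket{1}\ket{\psi_1})$ for the branches on qubit~$1$, the quantity $|\langle\psi_0|\psi_1\rangle|$ is the single parameter of this $2\times 2^{N-1}$ cut and hence a complete local-unitary invariant; one verifies directly (both overlaps being Gauss-type sums) that the complete state and the proposed normal form share it in every residue class. This is precisely where the mod-$4$ dichotomy is forced, and where one sees that the leftover qubit $N$ must carry the edge $\{1,N\}$ exactly when $N=4k$ and decouple otherwise. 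The hard part throughout is therefore not the reduction of $L$, which the rank argument settles, but pinning down the mark parity so that the explicit CNOT sequence terminates at the stated form rather than at some merely equivalent one.
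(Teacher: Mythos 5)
Your architecture is essentially the paper's: step one (erasing all edges inside the block $\{2,\dots,N\}$) is identical to the paper's step (a), and the reduction of the remaining star hypergraph by CNOTs acting inside the large block is the same idea as the paper's steps (b)--(c). Your encoding of the star as a graph $L$ on $\{2,\dots,N\}$ plus marks, and the translation of $\mathrm{CNOT}_{ct}$ into the congruence $A\mapsto(\mathbbm{1}+e_ce_t^{\mathsf T})A(\mathbbm{1}+e_te_c^{\mathsf T})$ with the mark of $c$ flipping by $[c\sim_L t]+[t\in M]$, are correct (the degenerate element $k=c$ of the toggled set is exactly what feeds the marks). The $\mathbb{F}_2$-rank argument is a clean, non-constructive replacement for the paper's explicit sequence $\mathrm{CNOT}_{i,i+1}$ followed by $\mathrm{CNOT}_{i+2,i}$, and your rank count ($N-1$ for $N$ odd, $N-2$ for $N$ even) correctly reproduces the claimed matching.

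The gap is the marks, and you name it yourself without closing it. The sentence ``the surviving mark is then controlled by a single parity that splits into residue classes mod $4$'' \emph{is} the content of the ``only if $N=4k$'' clause; it is asserted, not derived. Which marks accumulate depends on \emph{which} CNOT sequence realizes the congruence, so an existence-of-congruence argument cannot locate them, and the paper's step (c) is precisely this bookkeeping for its explicit sequence. Two further remarks. First, your claim that a two-edge $\{1,j\}$ ``can never be erased, only made to cancel'' is not right: if $j$ lies in a three-edge $\{1,j,j'\}$, the local gate $X_{j'}$ toggles $\{1,j\}$ (creating only a harmless one-edge $\{1\}$), which is exactly how the paper clears marks on matched vertices; only a mark on the isolated vertex $N$ is an irremovable Bell pair. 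Second, your ``consistency check'' can be promoted into the missing argument: after clearing all marks on matched vertices locally, the only residual freedom is the mark on the isolated vertex, and comparing the complete invariant $|\langle\psi_0|\psi_1\rangle|$ of the complete state (the Gauss sum $|{\rm Re}[(1+i)^{N-1}]+{\rm Im}[(1+i)^{N-1}]|/2^{N-1}$, already computed for Lemma~\ref{schmidtlemma}) with the two candidates ($0$ with the edge $\{1,N\}$, $2^{-(N-2)/2}$ without) forces the edge exactly when $N=4k$. If you restructure so that this comparison, rather than an unperformed parity computation, carries the case analysis, the proof closes; as written it does not.
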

\begin{proof}
Fig.~\ref{fig5} (a) represents the goal hypergraph state respecting a bipartition 
$1|2,3\dots N$. The algorithm to achieve this state is as follows:
\begin{itemize}

\item[(a)] Erase all the edges which only contain subsets of vertices 
$\{2, 3,\dots N\}$. This operation is local with respect to the bipartition.
\vspace{0.1cm}\\
\textit{ All the remaining edges are $\{\{1,i,j\}| i<j, 2\leq i,j \leq N\}$.}

\item[(b)] Apply $CNOT_{i,i+1}$, where ${2\leq i<N}$. 
\\
\textit{To give an example, we start with the  $CNOT_{23}$  gate.  The adjacency 
of $3$ is  $\mathcal{A}(3)=\{1, i\}$, where $i\neq 3 , 2\leq i \leq N.$   The 
edges introduced by the $CNOT_{23}$  gate are $\{e_t \cup \{2\}| e_t\in\mathcal{A}(3)\}$ 
and therefore, this action removes all the edges where $2$ is contained except the 
edge $\{1,2,3\}$ and adds the cardinality two edge $\{1,2\}$.  
\vspace{0.1cm} \\ 
At this step the remaining edges are $\{\{1,2\},\{1,2,3\},\{1,i,j\}| i<j, 3\leq i,j \leq N\}$.
Since  $2\notin \mathcal{A}(i+1), i\geq 3$, it is clear that consecutive  $CNOT$ gates 
presented in this step  do not modify edges containing $2$. $CNOT_{34}$ erases all the 
edges where $3$ is presented except already established  $\{1,2,3\}$ and  the gate where 
vertices $3$ and $4$ are presented together $\{1,3,4\}$. It also adds the cardinality 
two edge $\{1,3\}$. Repeating this procedure: 
\vspace{0.1cm}\\
All the remaining edges are of the form $\{\{1,j,j+1\} | 2\leq j < N \} $ 
and cardinality two edges $\{\{1, i\}|2\leq i< N\}$.}

\item[(c)] Apply $CNOT_{i+2,i}$, where $2\leq i< N -1\mbox{ and $i$ is even.} $
\\
\textit{The adjacency of $i=2$ mod $4$ right before applying the  $CNOT_{i+2,i}$ 
gate  is $\mathcal{A}(i)=\{\{1\}, \{1, i+1\}\}$.   The  $CNOT_{i+2,i}$ gate, 
therefore, erases/creates edges $\{\{1,i+2\}, \{1, i+1, i+2\}\}$. This means 
that  the adjacency of $i=0$ mod $4$ is only $\{1, i+1\}$, and the  
$CNOT_{i+2,i}$ gate can only erase $\{1, i+1, i+2\}$. See Fig.~\ref{fig5}~(b).
\\
Here we have to consider several cases: 
\vspace{0.2cm}\\
$(1)$  $N$ is odd: All the remaining edges are of the form $\{1, i, i+1\}$, 
for  even $2\leq i \leq N$ and also $\{1, i\}$ for unless $i=0$ mod $4$. It 
is easy to see that all two edges can be removed by  action of Pauli-X's.   
\vspace{0.2cm}\\
$(2)$ $N$ is even: If $N=2$ mod $4$, then the the last edge $\{1, N-1,N \}$ 
is erased and the edge $\{1,N\}$ cannot be created. Therefore, the last qubit 
is completely disentangled in this case. See Fig.\ref{fig5} (a).
\vspace{0.2cm}\\
$(3)$ $N$ is even: If $N=0$ mod $4$, then the the last edge $\{1, N-1,N \}$ 
is erased and the edge $\{1,N\}$ is created. See Fig.\ref{fig5} (b) for 
the exact procedure.}
\end{itemize}
\end{proof}

To sum up the previous theorem, there are three possibilities for the final 
hypergraph and it only depends on the number of parties in the hypergraph. 
If $N$ odd, then every vertex is exactly in one hyperedge. If $N=4k$, 
then the final hypergraph corresponds to the one in Fig.~\ref{fig5}~(a) 
including the dashed line.  Note that this is in line with the fact that
the maximal Schmidt coefficient for this case is $1/2$  (see Lemma 7),
as there is a Bell pair shared across the bipartition \cite{grwitness}.  
In  case $N=4k+2$, the dashed line is missing, therefore, the last qubit 
can be removed and the result for the maximal Schmidt coefficient 
matches with $N=4k+1$ case. 

\begin{figure}
\begin{center}
\includegraphics[width=0.75\columnwidth]{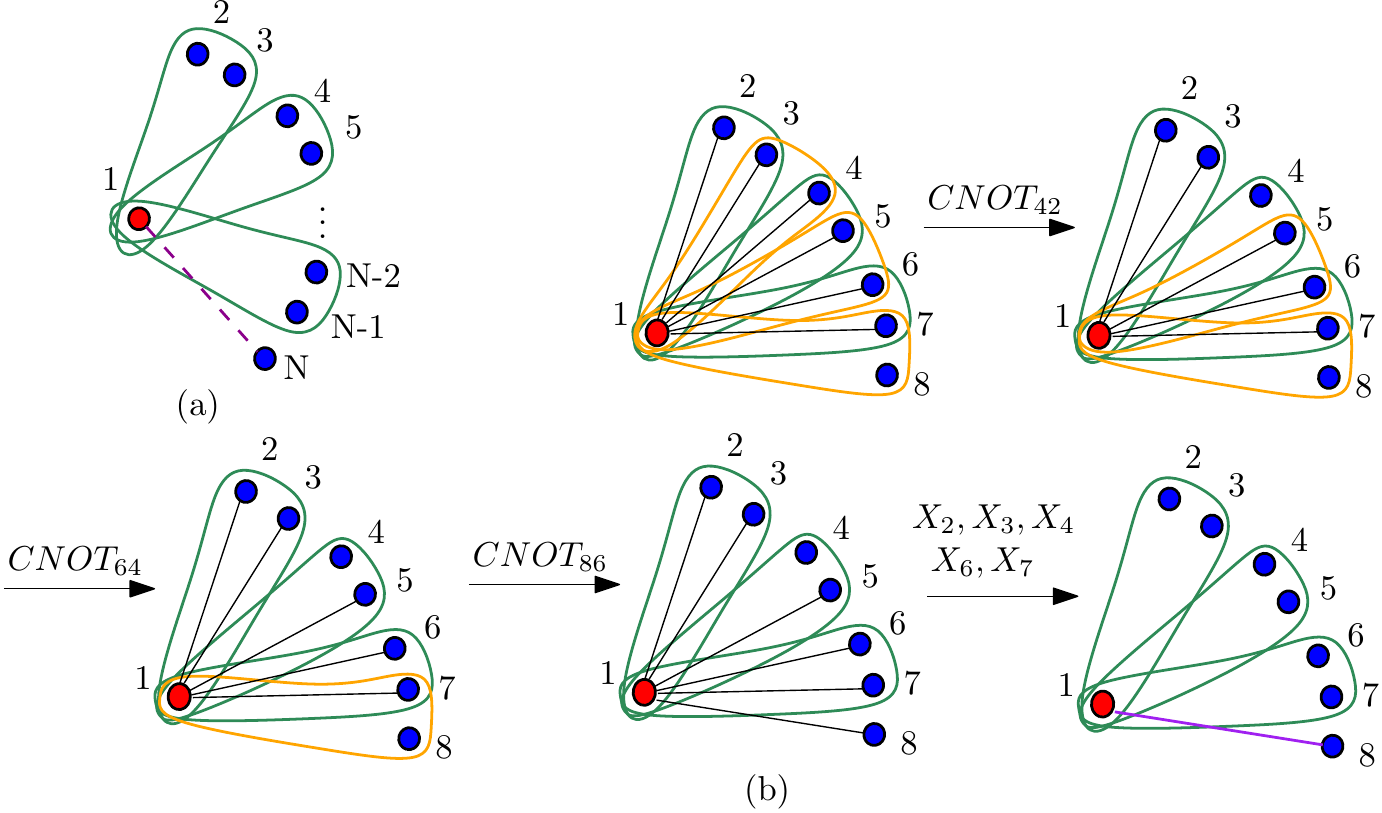}
\end{center}
\caption{(a) The target graph for Lemma 9. (b) An example 
for step (c) of the proof of Lemma 9. See the text for 
further details.}
\label{fig5}
\end{figure}

\begin{lemma}
Considering the bipartition $12|3,4,\dots,N$ for an $N$-qubit fully-connected three-uniform 
hypergraph state. Then, this state is locally  equivalent  to the three-uniform hypergraph 
state already derived from the bipartition $2|3\dots N$ and in addition has the edge 
$\{1,2,N\}$.
\end{lemma}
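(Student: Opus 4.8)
The plan is to build directly on the reduction for the $2|3\dots N$ bipartition established in Lemma~\ref{1|N-1}, treating qubit $1$ as an extra spectator that rides along. First I would erase every edge contained entirely in $\{3,\dots,N\}$ (a right-local operation) and sort the surviving edges by their intersection with the left block $\{1,2\}$ into three families: the ``$2$-sector'' edges $\{2,j,k\}$, the ``$1$-sector'' edges $\{1,j,k\}$, and the ``$12$-sector'' edges $\{1,2,k\}$, with $j,k\in\{3,\dots,N\}$. The key observation is that a $\mathrm{CNOT}$ whose control and target both lie in $\{3,\dots,N\}$ never changes which of $1,2$ an edge contains, so under such right-local gates these three families evolve independently. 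Since the $2$-sector together with the erased pure-right edges is exactly the complete three-uniform hypergraph on $\{2,\dots,N\}$, applying the right-local $\mathrm{CNOT}$ sequence of Lemma~\ref{1|N-1} (distinguished vertex $2$, including its Pauli-$X$ cleanup) turns it into the claimed $2|3\dots N$ normal form.

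Next I would track what those same gates do to qubit $1$. The $1$-sector is again a complete collection of two-fans attached to vertex $1$, so by the $1\leftrightarrow2$ symmetry of right-local $\mathrm{CNOT}$s it reduces to the very same chain as the $2$-sector but with $1$ in place of $2$. The $12$-sector is a star $\{\{1,2,k\}\}_k$ emanating from the pair $\{1,2\}$, which the consecutive gates $\mathrm{CNOT}_{3,4},\dots,\mathrm{CNOT}_{N-1,N}$ of step~(b) collapse to the single edge $\{1,2,N\}$, with step~(c) leaving it untouched since no edge $\{1,2,i\}$ with $i<N$ survives. At this stage the hypergraph is the $2|3\dots N$ normal form, together with the parallel qubit-$1$ chain and the edge $\{1,2,N\}$. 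Applying $\mathrm{CNOT}_{12}$, a left-local gate, then cancels the $1$-sector chain against the identical $2$-sector chain; by the graphical CNOT rule the only residue it creates on qubit $1$ is a single crossing cardinality-two edge $\{1,N\}$ generated out of $\{1,2,N\}$, while the normal-form edge $\{2,N\}$, when present, is preserved.

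Finally I would remove this crossing residue with Pauli-$X$ operations on the left block: since the triple $\{1,2,N\}$ is present, $X_2$ toggles exactly $\{1,N\}$ and keeps $\{1,2,N\}$, and the analogous flips acting through the surviving $2|3\dots N$ triples clear any remaining crossing cardinality-two edge on qubit $1$. The only new edges these flips produce lie entirely inside one side of the cut (edges within $\{1,2\}$, edges within $\{3,\dots,N\}$, and single-vertex phases), which are local and can be deleted by local controlled-$Z$ gates without disturbing the $2|3\dots N$ triples or the edge $\{1,2,N\}$. This leaves precisely the $2|3\dots N$ normal form plus $\{1,2,N\}$. I expect the main obstacle to be the bookkeeping of edge multiplicities through these steps --- in particular reproducing the $N \bmod 4$ case distinctions of Lemma~\ref{1|N-1} and verifying that the Pauli-$X$ cleanup cancels the crossing residue exactly, with no leftover edge on qubit $1$ other than $\{1,2,N\}$ and no damage to the $2|3\dots N$ normal form.
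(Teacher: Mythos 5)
Your proposal is correct in substance but orchestrates the reduction in the opposite order from the paper. The paper applies the left-local gates \emph{first}: after erasing the right-only edges, it hits the state with $\mathrm{CNOT}_{12}$, whose graphical rule (adjacency of the target $2$ contains $\{i,j\}$ from every $\{2,i,j\}$) immediately complements away the entire $1$-sector $\{\{1,i,j\}\}$ while leaving behind two-edges $\{1,i\}$; a single $X_2$ then cancels those, and only \emph{afterwards} is the right-local algorithm of Lemma~\ref{1|N-1} run on $2|3\dots N$, during which each $\mathrm{CNOT}_{i,i+1}$ is observed to also delete $\{1,2,i\}$, leaving $\{1,2,N\}$. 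You instead run the right-local sequence first and track all three sectors in parallel, which forces you to prove the (correct, and rather nice) observation that right-local CNOTs preserve and act independently on the three sectors, and then to perform a more delicate cancellation with $\mathrm{CNOT}_{12}$ at the end. The paper's ordering buys simplicity --- the $1$-sector is dead before the long CNOT chain starts, so no symmetry argument is needed and the only thing to track through Lemma~\ref{1|N-1} is the star $\{\{1,2,k\}\}$; your ordering buys a structural insight about sector independence at the cost of heavier endgame bookkeeping. One detail in that endgame is stated imprecisely: when $N-1=4k$ both $\{1,N\}$ and $\{2,N\}$ are present before $\mathrm{CNOT}_{12}$, so $\{1,N\}$ is complemented \emph{twice} (once via $\{N\}\in\mathcal{A}(2)$ from $\{2,N\}$ and once via $\{1,N\}\in\mathcal{A}(2)$ from $\{1,2,N\}$) and therefore \emph{survives} from the $1$-sector rather than being ``generated out of $\{1,2,N\}$''; in the other residue classes it is genuinely created. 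Either way a crossing edge $\{1,N\}$ remains and your $X_2$ cleanup removes it (at the price of right-local debris, which you correctly note is erasable), so the argument closes, but this case split must be done explicitly for the proof to be complete.
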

\begin{proof}
The steps are very similar to the $1|23\dots N$  case:
\begin{itemize}
\item[(a)] {From} a  hypergraph we remove all the edges which do not contain either 
party $1$, or $2$.
\\
\textit{All the remaining edges are $\{\{1, i, j\},\{2, i, j\},\{1,2,i\}| 2<i<j\leq N\}$}
\item[(b)]  Apply $CNOT_{12}$. 
\\
\textit{The adjacency of $2$ is $\mathcal{A}(2)=\{\{i,j\}, \{1, i\}\}$ for 
$ 2<i<j\leq N\}$. Augmented by the control qubit $1$, action of $CNOT_{12}$ 
removes $\{1, i, j\}$ and creates $\{1, i\}$, $3\leq i\leq N$.} 
\\
\textit{All the remaining edges are 
$\{\{1, i\},\{2, i, j\},\{1,2,i\}| 2<i<j\leq N\}$}
\item[(c)] Apply $X_2$. 
\\
\textit{All the elements  in the adjacency $\mathcal{A}(2)=\{\{1,i\},\{i,j\}\}$ 
for $ 2<i<j\leq N\}$  are added as edges to the hypergraph. Thus, all the edges 
of the type $\{1,i\}$ cancel out and $\{i,j\}$ can be directly removed.}
\\
\textit{All the remaining edges are $\{\{2, i, j\},\{1,2,i\}| 2<i<j\leq N\}$}
\item[(d)] Apply the algorithm from Lemma \ref{1|N-1} to the qubits $2|3\dots N$.
\\
Each of the action of the $CNOT_{i, i+1}$ gate  from  Lemma \ref{1|N-1}  step 
(b),  $3\leq i <N$, removes the edge $\{1,2,i\}$ in addition to the its actions 
considered in Lemma \ref{1|N-1}. Therefore only the edge  $\{1,2,N\}$ remains 
of this type and other ones resulting from the action algorithm from Lemma 
\ref{1|N-1} on qubits $2|3\dots N$.
\end{itemize} 
\end{proof}

\begin{corollary}
Considering any bipartition $1\dots p|(p+1)\dots N$  of the complete three-uniform 
hypergraph state, the consecutively applying local CNOT gates (respecting the 
bipartition)  reduces the hypergraph to the union of two hypergraphs as shown on  
Fig.~\ref{fig3}: $p|p+1\dots N$ and $N|1\dots p$, both already reduced to the normal 
form by the Lemma \ref{1|N-1}. 
\end{corollary}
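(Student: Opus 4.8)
The plan is to run, simultaneously on both blocks of the bipartition, the block reductions already established in Lemma~\ref{1|N-1} and in the preceding $12|3\dots N$ lemma, and to classify the edges of the complete three-uniform hypergraph by how they straddle the cut. Writing $A=\{1,\dots,p\}$ and $B=\{p+1,\dots,N\}$, every three-edge is of exactly one of four types: entirely inside $A$, entirely inside $B$, two vertices in $A$ and one in $B$ (type AAB), or one vertex in $A$ and two in $B$ (type ABB). First I would delete the two internal families: each such edge is a phase gate $C_e$ supported on a single block, hence a local operation respecting the cut, and after this move only the crossing families AAB and ABB survive.

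The key structural observation is that every remaining operation can be taken to be a CNOT (or Pauli-$X$) internal to one block. Let $U_A$ denote the product of all gates internal to $A$ and $U_B$ the product of all gates internal to $B$. Because $U_A$ and $U_B$ are supported on the disjoint qubit sets $A$ and $B$, they commute, so I may analyse their graphical effects independently and compose them in either order. For a crossing edge, an internal CNOT toggles only edges of the form $e_t\cup\{c\}$ with control $c$ and target $t$ in the same block, so it rewrites the footprint in that block while carrying the opposite-block vertices along unchanged; this is exactly what the graphical CNOT rule yields, since the opposite-block vertices in a toggled edge are inherited from some element of the target's adjacency.

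I would then reduce each block with the algorithm of Lemma~\ref{1|N-1}, initialised as in the $12|3\dots N$ case. Applied to the $B$-block, this algorithm plays a double role, exactly as step~(d) of the two-vertex lemma did: it reduces the ABB edges $\{a_i,b_k,b_\ell\}$ to the $p$-apexed matching of Fig.~\ref{fig3} (the pairs in $B$ collapse to consecutive edges $\{p,j,j+1\}$, with the optional two-edge $\{p,N\}$), while its consecutive CNOTs telescope the single $B$-vertex of each AAB edge upward, surviving only on the apex $N$. Symmetrically, $U_A$ reduces the AAB edges $\{a_i,a_j,b_k\}$ to the $N$-apexed matching $\{N,i,i+1\}$ on $A$ and pushes the single $A$-vertex of each ABB edge onto the apex $p$. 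Since $U_A$ and $U_B$ commute and act on disjoint footprints, the final edge set is precisely the union of the $N|1\dots p$ and $p|p+1\dots N$ normal forms claimed in Fig.~\ref{fig3}.

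The main obstacle is the bookkeeping inside the second and third moves: one must verify that the telescoping of the spectator apex is clean (that the intermediate two-edges produced as by-products cancel in pairs or are removed by a local $X$, as in steps~(b)--(c) of the two-vertex lemma) and that the surviving edges are exactly those of odd multiplicity. This is where the parity arithmetic of edge complementation already used in the proof of Theorem~1 re-enters, and where the residue-class-modulo-$4$ distinctions of Lemma~\ref{1|N-1} reappear and decide whether the dashed edge $\{p,N\}$ is present; in particular both block reductions can generate a copy of $\{p,N\}$, and whether it survives is settled by combining the two parity conditions, reproducing cases~(c1) and~(c2) of Fig.~\ref{fig3}. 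Once the gate order is fixed so that each internal CNOT touches only the intended footprint, the remainder is a direct if lengthy application of the rules already proved, which is why it is deferred to the Appendix.
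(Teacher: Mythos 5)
Your proposal takes essentially the same route as the paper: the paper's entire proof consists of the remark that one applies the algorithm of Lemma~\ref{1|N-1} first to $N|1\dots p$ and then to $p|(p+1)\dots N$, with the preceding $12|3,\dots,N$ lemma supplying exactly the bookkeeping you describe for how each block's CNOTs telescope the crossing edges of the opposite type. Your additional observations (commutation of the two block unitaries, and the parity conditions deciding the dashed edge $\{p,N\}$) are consistent with Fig.~\ref{fig3} and only elaborate the same argument.
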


This result is obtained by applying the algorithm from Lemma \ref{1|N-1} first to 
$N|1\dots p$ and then to $p|(p+1)\dots N $. This ends the proof of Lemma 5.

\subsection{Proof of Lemma~\ref{schmidtlemma}:}

\begin{proof}
First we consider $1$~vs.~$N-1$ bipartition. To calculate the maximal 
Schmidt coefficient we compute the reduced density matrix. As the state 
is symmetric, we only have to take the bipartition $1|2,3,\dots,N$. We 
have 
\begin{equation}
\varrho_1 = 
Tr\big( \ketbra{H} \big)_{2\dots N}
=\frac{1}{2^N}
\begin{pmatrix}
2^{N-1}& a \\
a & 2^{N-1} 
\end{pmatrix}.
\end{equation}
The diagonal elements follow directly from the representation of the 
hypergraph state in Eq.~(\ref{eq-hg-equally}) and do not depend on the
structure of the hypergraph. 
For computing the off-diagonal entries, we write the hypergraph state 
as
\begin{equation}
\ket{H}=\ket{0}\sum_x\Big[(-1)^{f_0(x)}\ket{x} \Big]+
\ket{1}\sum_x \Big[(-1)^{f_1(x)}\ket{x} \Big].
\end{equation}
with $x \in \{0,1\}^{(N-1)}.$ Since we deal with three-uniform complete 
hypergraph states, we have $f_0 = {\binom{w(x)}{3}}$ and  
$f_1={\binom{w(x)+1}{3}}$, where $w(x)$ is the weight (i.e., the number
of ``1'' entries) of $x$. We can then write
\begin{align}
\label{eq-a}
a &=\sum_x (-1)^{f_0(x)+f_1(x)}.
\end{align}
The values of $f_0$ and $f_1$ do only depend on $w(x) \mod 4$.
Instead of summing over $x$, we can also sum over all  possible $k=w(x)$ 
in Eq.~(\ref{eq-a}) and distinguish the cases of  $k\mod 4.$ The value
for a given $k$ is then up to the sign given by the numbers of possible
$x$ with the same $w(x)=k.$ We have:
\begin{align}
a=
&\sum_{k=0,4\dots}^{2^{N-1}}
\bigg[
\binom{N-1}{k}+\binom{N-1}{k+1}-\binom{N-1}{k+2}-\binom{N-1}{k+3}
\bigg]
\nonumber
\\
\label{eq-f(x1)}
=& Re \big[(1+i)^{N-1}\big]+Im\big[(1+i)^{N-1}\big].
\end{align}
To give the final result, we have to consider several cases in Eq.~(\ref{eq-f(x1)}): 
If $N=4\ell$, then $a=0$, therefore,  $\varrho_1$  is  maximally mixed and $\lambda_1=1/2.$ 
If $N=4\ell+1$ or  $N=4\ell+3$,  then $a=\pm 2^{\frac{N-1}{2}}$. Then, it follows that 
$\lambda_1=1/2+1/2^{\frac{N+1}{2}}$. Similarly, for $N=4\ell+2$, $a=\pm 2^{\frac{N}{2}}$  
and  therefore   $\lambda_1=1/2+1/2^{\frac{N}{2}}$. This ends the computation of $\lambda_1$.

Second, we look at the $2$~vs.~$N-2$ bipartitions. The idea of the proof very much 
resembles the previous case. First, we take the bipartition $1,2|3,4,\dots, N$ and 
trace out the second part:
\begin{equation}
\varrho_{12}=Tr\big( \ketbra{H} \big)_{3\dots N}=\frac{1}{2^N}
\begin{pmatrix}
2^{N-2}& a_{+} & a_{+} & 0 \\
  a_{+} & 2^{N-2}& 2^{N-2} &a_{-}\\
  a_{+}&2^{N-2}&2^{N-2}&a_{-}\\
 0 &a_{-}&a_{-}&2^{N-2}
 \end{pmatrix}.\end{equation}
For computing the entries, we express a hypergraph state in the following 
way:
\begin{align}
\ket{H} & =\ket{00}\sum_x\Big[(-1)^{f_{00}(x)}\ket{x} \Big]
+\ket{01}\sum_x\Big[(-1)^{f_{01}(x)}\ket{x} \Big]
\nonumber
\\
&+\ket{10}\sum_x\Big[(-1)^{f_{10}(x)}\ket{x} \Big]
+\ket{11}\sum_x\Big[(-1)^{f_{11}(x)}\ket{x} \Big].
\end{align} 
with $x \in \{0,1\}^{(N-2)}.$ The diagonal elements of $\varrho_{12}$ are, as before, 
easy to determine. This is also the case for the two anti-diagonal terms 
$\ket{01}\bra{10}= \ket{10}\bra{01}$ as $f_{01}+f_{10}$ is always even. 
The next term, $a_{+}$, is derived as Eqs.~(\ref{eq-a}, \ref{eq-f(x1)}):  
$a_{+}=Re[(1+i)^{N-2}]+Im[(1+i)^{N-2}]. $ For the term $\ket{00}\bra{11}$, 
$f_{00}(x)+f_{11}(x)$ is even if $w(x)$ is even and is odd if $w(x)$ is odd. 
Therefore  $\sum_x (-1)^{f_{00}+f_{11}}=0 $.  For the last term we find 
$a_{-}=a-a_{+}=Re\big[(1+i)^{N-2}\big]-Im\big[(1+i)^{N-2}\big]$.

Putting all these terms together in the matrix, one can calculate 
the maximal eigenvalue of $\varrho_{12}$:
\begin{align}
\lambda_2 & =\frac{1}{8} (3 + \frac{\sqrt{4^N + 128 (a_{+}^2 +a_{-}^2)}}{2^N} )
=\frac{1}{8} (3 + \frac{\sqrt{4^N + 64 Abs[(1 + i)^N]^2}}{2^N})
\nonumber
\\
&=\frac{1}{8} (3 + \frac{\sqrt{4^N+2^{N+6}}}{2^N}).
\end{align}

Finally, we have to consider the $1,2,3|4\dots N$ bipartition and write down the 
reduced density matrix: 
\begin{equation}
\varrho_{123}=Tr\big( \ket{H}\bra{H} \big)_{4\dots N}=\frac{1}{2^N}
\begin{pmatrix}
2^{N-3}& c & c & 0 & c & 0& 0& b \\
c &2^{N-3} &2^{N-3} & -b & 2^{N-3} & -b& -b& 0 \\
c &2^{N-3} &2^{N-3} & -b & 2^{N-3} & -b& -b& 0 \\
0 &-b &-b &  2^{N-3} & -b &  2^{N-3}&  2^{N-3}& -c \\
c &2^{N-3} &2^{N-3} & -b & 2^{N-3} & -b& -b& 0 \\
0 &-b &-b &  2^{N-3} & -b &  2^{N-3}&  2^{N-3}& -c \\
0 &-b &-b &  2^{N-3} & -b &  2^{N-3}&  2^{N-3}& -c \\
b &0 &0 & -c &0 & -c&  -c&2^{N-3} 
 \end{pmatrix},
 \end{equation}
where $c=Re\big[(1+i)^{N-3}\big]-Im\big[(1+i)^{N-3}\big]$ 
and $b=-Re\big[(1+i)^{N-3}\big]+Im\big[(1+i)^{N-3}\big]$.

{From} this we can be derive all possible values of maximal Schmidt coefficient 
$\lambda_3$. 
If $N \equiv 4k$, then $\lambda_3=2^{-2}+2^{-4k-3}\sqrt{48\cdot 2^{4k}+4^{4k}}$. 
If $N\equiv 4k+1$, then $\lambda_3=2^{-2}+2^{-2k-1}+2^{-4k-3}\sqrt{16^k(16+2^{4k}+2^{2k+2})}$.  
If $N\equiv 4k+2$, then $\lambda_3=2^{-2}+2^{-2k-1}+2^{-4k-3}\sqrt{2^{4k}(2+2^{2k})^2}$ and 
finally, if $N\equiv 4k+3$, then $\lambda_3=2^{-2}+2^{-2k-2}+2^{-4k-3}\sqrt{2^{4k}(4+2^{4k}+2^{2k+1})}$.
It can be easily seen that $\lambda_3$ is decreasing with $N$ and it is only greater than $1/2$ when $N=6$. 
\end{proof}

\section*{References}

\end{document}